\title{A Robust Functional EM Algorithm for Incomplete Panel Count Data}
\author{%
  Alexander Moreno\\
  Georgia Institute of Technology
  \And
  Zhenke Wu\\
  University of Michigan\\
  \And
  Jamie Yap\\
  University of Michigan\\
  \And Cho Lam\\
  University of Utah\\
  \And David Wetter\\
  University of Utah\\
  \And Inbal Nahum-Shani\\
  University of Michigan\\
  \And Walter Dempsey\\
  University of Michigan\\
  \And James M. Rehg\\
  Georgia Institute of Technology
}
\begin{document}

\maketitle
\newtheorem{claim}{Claim}
\newtheorem{lemma}{Lemma}
\newtheorem{corollary}{Corollary}
\newtheorem{theorem}{Theorem}
\newtheorem{definition}{Definition}
\newtheorem{assumption}[]{Assumption}
\newtheorem{proposition}{Proposition}
\newcommand\numberthis{\addtocounter{equation}{1}\tag{\theequation}}
\newcommand{\cmmnt}[1]{}
\begin{abstract}
  Panel count data describes aggregated counts of recurrent events observed at discrete time points.  To understand dynamics of health behaviors and predict future negative events, the field of quantitative behavioral research has evolved to increasingly rely upon panel count data collected via multiple self reports, for example, about frequencies of smoking using in-the-moment surveys on mobile devices. However, missing reports are common and present a major barrier to downstream statistical learning. As a first step, under a missing completely at random assumption (MCAR), we propose a simple yet widely applicable functional EM algorithm to estimate the counting process mean function, which is of central interest to behavioral scientists. The proposed approach wraps several popular panel count inference methods, seamlessly deals with incomplete counts and is robust to misspecification of the Poisson process assumption.  Theoretical analysis of the proposed algorithm  provides finite-sample guarantees by expanding parametric EM theory \cite{balakrishnan2017statistical,wu2016convergence} to our general non-parametric setting. We illustrate the utility of the proposed algorithm through numerical experiments and an analysis of smoking cessation data. We also discuss useful extensions to address deviations from the MCAR assumption and covariate effects.
  \end{abstract}
\section{Introduction}
A major goal in behavioral medicine is identifying temporal patterns of risk factors preventing an individual from successfully modifying a health-related behavior. In smoking cessation, one would like to know times of day, locations, and other contextual factors such as smoking opportunity~\cite{kirchner2013geospatial} that may precipitate lapse to inform interventions to prevent lapse~\cite{rehg2017mobile}. A basic task is to describe when smoking occurs through modeling the \textit{counting process} of repeated negative events. One goal is to estimate the mean function of the counting process to characterize the dynamics of health-related behaviors at the population level.

The most common and widely-used measurement for behaviors is self-report via Ecological Momentary Assessment (EMA): a participant responds to prompts and enters data via a phone app~\cite{shiffman2008ecological}. An EMA can be randomly triggered by the app. Random EMAs provide a less-biased data collection paradigm relative to user initiated EMAs and have been used in a wide variety of behavioral studies~\cite{shiffman2007prediction,piasecki2011subjective,o1998coping}. Here, a participant is prompted 3-4 times a day at \textit{random} times. As EMA times will not generally co-occur with events of interest, participants report the \textit{cumulative number of events} since the last EMA. For example, how many cigarettes they smoked since the last prompt. This takes the form of~\textit{panel count data}~\cite{kalbfleisch1985analysis}\cite{sun1995estimation}, where exact times of a counting process are unobserved. Only the cumulative number of events since the last observation are measured, where observation times are assumed to follow some unknown distribution.


The counting process \textit{mean function} can describe population-level temporal patterns of smoking. It converts the discrete patterns of smoking counts from a population of participants (see Fig.~\ref{fig:raw-counts}) into a temporally-continuous summary of smoking behavior (see Fig.~\ref{fig:results}b). However, the \textit{missingness} inherent in EMA data makes  consistent estimation of the mean function difficult. There are several conditions causing missingness. An EMA may be ignored by the user or opened and then abandoned. Second, the mobile app itself may postpone the triggering of an EMA for any of several reasons (e.g. battery low). While EMAs are triggered randomly, the random process is constrained to have a minimum temporal spacing between EMAs, in order to keep participant burden at an acceptable level. If EMAs are postponed or ignored too many times, it will not be possible to trigger the full set of EMAs for the day, resulting in missing EMAs.  Missing data is a common issue in studies that involve EMAs, with \cite{jones2019compliance} noting that over 126 studies, the average missingness rate is 25\%. 

For mean function estimation, missing EMAs cause problems when they lead to inaccurate counts of the total cigarettes smoked between EMAs. Due to recall bias, longer intervals between EMAs are less reliable.\footnote{In principle a participant who completes very few EMAs but reports their counts with 100\% accuracy is not creating missing data, because standard mean function estimators~\cite{wellner2000two} would still be asymmpotically unbiased. In practice, however, behavioral scientists frequently treat EMAs as if they are missing if the interval to the last EMA exceeds a cut-off, such as 24 hours, due to recall bias.} Behavioral scientists have developed heuristic imputation schemes to adjust for missing counts~\cite{hallgren2013missing}, but these may not consistently estimate the mean function. \textit{This paper presents the first self-contained and systematic treatment of missing data for panel count data, by providing a simple Expectation-Maximization (EM)  algorithm \cite{dempster1977maximum,balakrishnan2017statistical} for estimating the mean function with finite-sample theoretical guarantees.}

Our primary methodological contribution is a functional EM algorithm to wrap standard non-parametric mean function estimators to handle missing data under a missing completely at random (MCAR) assumption \cite{little2019statistical}. The E-step uses estimates from a fitting method to impute missing data, and the M-step calls that fitting method to estimate a mean function. This extends several classic non-parametric methods \cite{wellner2000two,lu2007estimation} and the baseline-only version of a semi-parametric mean function estimation method \cite{wang2013augmented} to  the setting of missing data.

We analyze our EM algorithm using the frameworks described in \cite{balakrishnan2017statistical,wu2016convergence} and obtain finite sample guarantees. This requires care as we are extending their work from the parametric to the non-parametric setting. This paper addresses three major theoretical challenges in  the context of functional EM by: (i) noting that an infinite dimensional derivative in our setting is analogous to the inner products used in \cite{wu2016convergence,balakrishnan2017statistical}, (ii) showing local uniform strong concavity of our population E-step, and (iii) a high probability finite sample bound on the convergence of the M-step. The lack of ground truth for missing data in real-world settings makes evaluation difficult and theoretical guarantees important. Finally, our proposed algorithm can consistently estimate the mean function even when the Poisson process assumption is violated. This is achieved by recovering the population MLE, and noting that under certain integrability conditions the population MLE of the Poisson process log-likelihood is the true mean function of the counting process, even when the counting process is not Poisson.


\section{Related Work}

\textbf{Ecological Momentary Assessments} are frequently used to record counts for behavior, including smoking counts \cite{shiffman2009many,wang2012truth,griffith2009method}, alcohol counts \cite{collins2003feasibility} and promiscuous behaviors \cite{wray2016using}.  However, none of these papers use panel count data methods to estimate the mean function.

\textbf{Panel count data} analysis comes from nonparametric statistics, but our missing data problem has not been addressed. The closest works to ours are \cite{wang2013augmented,wellner2000two,lu2007estimation}. These papers estimate the mean function from panel count data, but cannot account for missing EMAs as in our motivating application. \cite{wang2013augmented} develop an EM approach under a more limited missingness model assuming that the total counts for each participant are known prior to data analysis. This is unsuitable for analyzing data with missing EMAs. However, two strengths are that they can incorporate baseline covariate information, and they relax the standard assumption of independence between observation times and the counting process.

Neither of
\cite{wellner2000two,lu2007estimation} handle missing counts, and cannot be applied directly to our setting. We propose an EM algorithm for missing data to wrap any of \cite{wang2013augmented,wellner2000two,lu2007estimation}, building on existing methods in the M-step.  \cite{wellner2000two} provides the first strong theoretical guarantees for a mean function estimator in the panel count setting.  \cite{lu2007estimation} improve their rate of convergence with spline mean functions.

Many other papers in the literature focus on either extending to the semi-parametric setting, relaxing assumptions in either the non-parametric or semi-parametric setting, or improving estimation under specific distributional assumptions.  They do not address missing data.  \cite{wellner2007two} extended \cite{wellner2000two} to the semi-parametric setting, \cite{huang2006analysing} analyzed panel count with informative observation times and subject-specific frailty, and \cite{hua2014spline} proposed using a smooth semi-parametric estimator to handle over-dispersion in panel count data.  \cite{ding2018variational} proposed using a squared Gaussian process intensity function.  

\textbf{EM Theory.} Also relevant is recent work on convergence of the EM algorithm to the true parameter \cite{balakrishnan2017statistical,wu2016convergence}.  In \cite{balakrishnan2017statistical}, they show that under good initialization and certain regularity conditions, population EM converges to the true parameter.  With finite sample uniform convergence of the M-step, one can also show a finite-sample bound for EM.  \cite{wu2016convergence} show that by assuming the ability to optimize over a ball around the true parameter, one can weaken regularity conditions and replace uniform convergence of the $M$-step with three concentration inequalities, often easier to prove.  We base our population guarantees on \cite{wu2016convergence}.  We base our finite-sample guarantees on \cite{balakrishnan2017statistical}, because \cite{wu2016convergence} assumes that empirical and population norms are the same which does not hold in our non-parametric setting.

\begin{figure}
    \centering
    \includegraphics[width=0.5\textwidth]{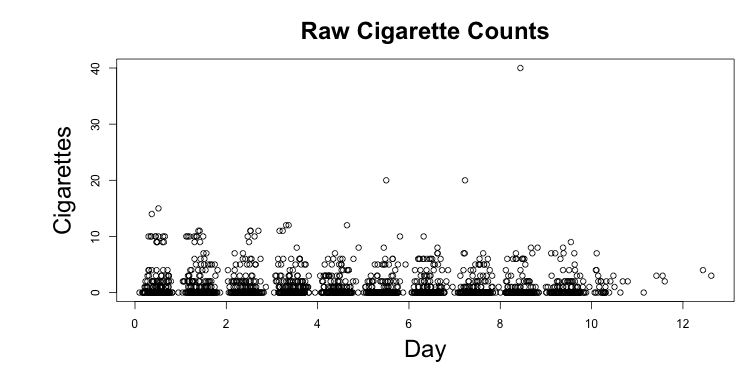}
    \caption{Raw cigarette counts smoked between observations.  We want to convert this to a mean function of cumulative cigarettes smoked (Figure \ref{fig:results}b), but some counts are missing or unreliable.}
    \label{fig:raw-counts}
\end{figure}
\section{Model}
\subsection{Complete Data}
Let $N=\{N(u):u\geq 0\}$ be a univariate counting process.  The goal is to estimate the mean function $\Lambda^*(u)=\mathbb{E}[N(u)]$ over a study window $[0,\tau]$.  Let $K$ be the random number of observations for a participant.  Let $T=(T_1,\cdots,T_K)\in \mathbb{R}_+^K$ be a random vector of observation times, with $T_0=0$.  Let $\Delta N=(\Delta N_1,\cdots,\Delta N_K)$ be the count increments: $\Delta N_j=N(T_j)-N(T_{j-1})$. Let $\Delta \Lambda(T_j)=\Lambda(T_j)-\Lambda(T_{j-1})$ be mean function increments. For each participant $i=1,\cdots,n$, we observe $Y=(\Delta N,T,K)\in \mathcal{N}\times \mathcal{T}\times \mathcal{K}$ where $\mathcal{N},\mathcal{T},\mathcal{K}$ are the corresponding sample spaces.    Let $P_n,P$ denote the empirical and true measures on $\mathcal{N}\times \mathcal{T}\times \mathcal{K}$, respectively.  For a measurable function $f$, let $P_n f=\frac{1}{n}\sum_{i=1}^n f(Y_i)$ and $Pf=\int fdP$.  Under a non-homogeneous Poisson process, the complete data sample log-likelihood is
\begin{align*}
    l_n(Y|\Lambda)&=nP_n\left\{ \sum_{j=1}^{K}\Delta N_{j}\log[\Delta \Lambda(T_j)]-\Lambda(T_{K})\right\},
\end{align*}
and the population log-likelihood is similar but with $P$ instead of $nP_n$. The goal is consistent mean function estimation even when the Poisson process assumption is violated.

\subsection{Missing Data}

Unlike previous work, we assume certain observations $\Delta N_j$ are missing.  For $\Delta N\in \mathbb{R}^K$, each observation $\Delta N_j$ is missing completely at random with probability (w.p.) $\epsilon\in [0,1)$.  Let $\tau\in \{\circ,1\}^K$ be the missingness pattern, where $\circ=0$ but the notation represents missingness.  Let $s=1-\tau$, thus $s_j=1$ if $\Delta N_j$ is missing.  Let $\Delta N^{(\tau)}= \Delta N\odot\tau$ and $\Delta N^{(s)}= \Delta N\odot s$, where $\odot$ is elementwise product.  Then $\Delta N=\Delta N^{(s)}+\Delta N^{(\tau)}$.  We observe $\Delta N^{(\tau)}$ but not $\Delta N^{(s)}$.  Let $Z\equiv\Delta N^{(s)}$ represent our missing data vector, and let $\mathcal{Z}$ be the space of values for our missing data.  Then $Y=(\Delta N^{(\tau)},T,K)$, and $(Y,Z)\in \mathcal{N}\times \mathcal{T}\times \mathcal{K}\times \mathcal{Z}$ gives us the observed and missing parts of the data, respectively.  In this case $P_n$ and $P$ are now the empirical and true measures for $\mathcal{N}\times \mathcal{T}\times \mathcal{K}\times \mathcal{Z}$.

\section{EM Algorithm}
We first describe the general setting for EM. Let $f_{\Lambda}(y,z)$, $p_\Lambda(y)$, and $k_\Lambda(y|z)$ be joint, marginal, and conditional densities respectively. Let $\Theta$ be a convex set of functions and $\{\Theta_n\}$ a sieve: a nested sequence $\Theta_1\subset \Theta_2\subset \cdots $ such that $\cup_{n=1}^\infty \Theta_n\subseteq \Theta$ is dense in $\Theta$. We define the following.
\begin{definition}
(Population $Q$-function) $Q(\Lambda'|\Lambda)\equiv\int_{\mathcal{Y}}\left(\int_{\mathcal{Z}(y)}\log(f_{
    \Lambda'}(y,z))k_{\Lambda}(z|y)dz\right)p_{\Lambda^*}(y)dy$.
\end{definition}
\begin{definition}
(Sample $Q$-function) $Q_n(\Lambda'|\Lambda;\{Y_i\}_{i=1}^n)\equiv\frac{1}{n}\sum_{i=1}^n \mathbb{E}_{\Lambda}[\log f_{\Lambda'}(y,z)|Y_i]$.
\end{definition}
\begin{definition}
$B_r(\Lambda^*)\equiv \{\Lambda:\Vert \Lambda-\Lambda^* \Vert_\infty \leq r\}$ where $\Vert \cdot\Vert_\infty$ is the essential supremum.
\end{definition}
\begin{definition}
(Population $M$-step) $M(\Lambda^{(t)})=\arg\max_{\Lambda'\in \Theta\cap B_r(\Lambda^*)}Q(\Lambda'|\Lambda^{(t)})$.
\end{definition}
\begin{definition}
(Sample $M$-step) $M_n(\Lambda^{(t)})=\arg\max_{\Lambda'\in \Theta_n\cap B_r(\Lambda^*)}Q_n(\Lambda'|\Lambda^{(t)})$.
\end{definition}
A $Q$-function is an $E$-step of the EM algorithm.. Population EM repeatedly takes $\Lambda^{(t+1)}=M(\Lambda^{(t)})$, where $\Lambda^{(t)}$ denotes iteration $t$'s estimate. Sample EM repeatedly takes $\Lambda^{(t+1)}=M_n(\Lambda^{(t)})$. Next we describe EM computation.   Algorithm 1 describes sample EM; population EM is similar.

\begin{algorithm}[H]\label{alg:em-algorithm}
\begin{algorithmic}[1]
\State Initialize $\Lambda_0\in \Theta_n\cap B_r(\Lambda^*)$, let $t\leftarrow0$\;
\While{not converged}
\State(E-step): Compute $Q_n(\Lambda'|\Lambda^{(t)};\{Y_i\}_{i=1}^n)$ using current mean function estimate
\State(M-step): $\Lambda^{(t+1)}\leftarrow\arg\max_{\Lambda'\in \Theta_n\cap B_r(\Lambda^*)}Q_n(\Lambda'|\Lambda^{(t)};\{Y_i\}_{i=1}^n)$ using existing method
\State $t\leftarrow t +1$
\EndWhile
\end{algorithmic}
%
\caption{Sample-based EM Algorithm for Panel Count Data with Missing Counts.}
\end{algorithm}
Algorithm 1 assumes $B_r(\Lambda^\star)$ is known following \cite{wu2016convergence}: in practice it is unknown, and we suggest trying multiple initializations and choosing the one that leads to the highest final log-likelihood. In our numerical experiments and data analysis, for illustration we assume $\Theta_n = \{\textrm{monotone step functions with at most $n$ steps}\}$. Next we describe the E- and M-steps.

\subsection{E-Step}
The population and sample Q-functions replace missing counts with mean function estimates. The sample Q-function (the population version is similar but uses $P$ instead of $P_n$.) is
\begin{align*}
    Q_n(\Lambda'|\Lambda;\{Y_i\}_{i=1}^n)
    &=P_n\left\{\sum_{j=1}^K \Delta N_{j}^{\tau_j}\Delta \Lambda^{s_j}(T_j)\log[ \Delta\Lambda'(T_j)]-\Lambda'(T_{K})\right\}
\end{align*}
\subsection{M-Step}
The population M-step maximizes $Q(\Lambda'|\Lambda)$ using 
\[\Theta=\{\Lambda:[0,\tau]\rightarrow[0,\infty) \vert \Lambda\textrm{ is nondecreasing, }\Lambda(0)=0,\Lambda(\tau)\leq U_{\textrm{all}}\},\]  where $U_{\textrm{all}}$ is a uniform upper bound for functions in this set.  The sample $M$-step uses a convex set $\Theta_n\subset \Theta$ (sieve estimator).  $\Theta_n$ is potentially monotonic step functions \cite{wellner2000two,wang2013augmented} or monotonic splines \cite{lu2007estimation}, subject to the constraint of having the upper bound of $U_{\textrm{all}}$ over the study time.  Maximization proceeds via existing methods \cite{wellner2000two,lu2007estimation,wang2013augmented}. Like $B_r(\Lambda^*)$, $U_{\textrm{all}}$ is unknown.

\section{Theory}
Section \ref{sec:assumptions} defines assumptions, \ref{sec:measures-metrics-inner} defines distances and a quasi-inner product, and \ref{sec:pop-theory} gives two regularity condition lemmas based on \cite{wu2016convergence}.  Proposition \ref{prop:population-contraction} shows that with good initialization and sufficiently small missingness probability, a population EM step gives a contraction, moving our estimate closer to the true mean function after every iteration.  Theorem \ref{thm:population-convergence} then shows that population EM converges linearly to the true mean function.

We then show finite-sample theory. Proposition \ref{prop:rate-of-convergence} states that with high probability, the sample M-step converges uniformly to the population M-step. Theorem \ref{thm:sample-convergence} states that with high probability the distance between the current estimate and the true mean function is bounded by two terms: one describes applying population EM, and the other involves the uniform convergence of the $M$-step.

\subsection{Assumptions}\label{sec:assumptions}
We make the following assumptions, similar assumptions were made in \cite{wellner2000two,lu2007estimation,wang2013augmented}:
\begin{enumerate}
    \item The counting process $N$ is independent of the number of observations $K$ and observation times $T$, respectively;\label{assm:counting-obs-process-ind}
    \item The observation times are random variables taking values in the bounded set $[\tau_0,\tau]$ where $0<\tau_0<\tau$ and $\tau\in (0,\infty)$;\label{assm:random-observation-times}
    \item The number of observations is bounded, i.e., there exists $k_0>0$ such that $P(K\leq k_0)=1$;\label{assm:bounded-number-observations}
    \item The true mean function is uniformly bounded over the study, satisfying $\Lambda^*(u)\leq U\leq U_{\textrm{all}}$ for some $U\in (0,\infty)$ and all $u\in [\tau_0,\tau]$. Recall $U_{\textrm{all}}$ is a uniform upper bound on functions in $\Theta$; \label{assm:bounded-mean-function}
    \item The first derivative of $\Lambda^*(u)$ has a positive lower bound in $[\tau_0,\tau]$.\footnote{This is not strong, and doesn't imply at least a constant first derivative/superlinear function in general.  Consider $\Lambda^*(t)=t^{1/2}$ over $[0,\tau]$.  The derivative $\frac{1}{2\sqrt{t}}\geq \frac{1}{2\sqrt{\tau}}$, but this function is sub-linear.}\label{assm:bounded-derivatives}
    \item The observation times are $\alpha$-separated.  That is, $P(T_{j}-T_{j-1}\geq \alpha)=1$ for some $\alpha>0$ and all $j=1,\cdots,K$;\label{assm:alpha-separated}
    \item $\mathbb{E}[\exp(aN(t))]$ is uniformly bounded for $t\in [0,\tau]$ some constant $a$;\label{assm:unif-bounded-mgf}
    \item The count increments $\Delta N_{j}$ are missing completely at random (MCAR) w.p. $\epsilon>0$. That is, $s_{j_i},j_i=1,\cdots,K_i,i=1,\cdots,n$ are iid $\textrm{Bernoulli}(\epsilon)$ random variables.\label{assm:MCAR} 
\end{enumerate}
In the context of an observational smoking study, some of these assumptions can be scientifically expressed: \ref{assm:counting-obs-process-ind}) and \ref{assm:random-observation-times}) EMAs are delivered at random times throughout the study and are independent of smoking times \ref{assm:bounded-number-observations}) there is a maximum number of EMAs delivered over the study \ref{assm:bounded-derivatives}) there is a minimum smoking risk throughout the study \ref{assm:alpha-separated}) there is a minimum time between EMAs \ref{assm:MCAR}) the probability that an EMA is missed is $\epsilon>0$. The only new assumption for panel count is assumption \ref{assm:MCAR}.

\subsection{Measures, Distance Metrics, and Quasi-Inner Product}\label{sec:measures-metrics-inner}
We next define measures for constructing distance metrics between mean functions.  We then define a quasi-inner product that is used to show regularity conditions for population EM theory. Let $B,B_1,B_2$ be the intersection of Borel sets in $\mathbb{R}$ with $[0,\tau]$.
\begin{definition}[Measures for sets containing observation times]
$\mu(B)\equiv \mathbb{E}\left[\sum_{j=1}^K 1_B(T_{j})\right]$ and $\mu_2(B_1\times B_2)\equiv\mathbb{E}\left[\sum_{j=1}^K 1_{B_1}(T_{j-1})1_{B_2}(T_{j})\right].$
\end{definition}
  Then $\mu(B)=\mathbb{E}\vert \{\textrm{observations in }B\}\vert$ and $\mu_2(B_1\times B_2)=\mathbb{E}|\{\textrm{one observation in $B_1$, next in $B_2$}\}|$.
\begin{definition}[$d_2$ metric for mean functions]
$\Vert \Lambda_1-\Lambda_2\Vert\equiv[\int_0^\tau\int_0^\tau |(\Lambda_1(v)-\Lambda_1(u))-(\Lambda_2(v)-\Lambda_2(u))|^2d\mu_2(u,v)]^{1/2}.$
\end{definition}
This is the $d_2$ metric of \cite{wellner2000two}.  Under assumption \ref{assm:bounded-number-observations}, convergence in $\Vert \cdot\Vert$ implies convergence in $L^2(\mu)$.  We base our theory on convergence in this norm.  Now define
\begin{align*}
    \langle \nabla Q(\Lambda^{(l)}|\Lambda),\Lambda'\rangle
    &\equiv\lim_{\eta\downarrow 0}\frac{Q(\Lambda^{(l)}+\eta \Lambda'|\Lambda)-Q(\Lambda^{(l)}|\Lambda)}{\eta}\\
    &=P\left\{\sum_{j=1}^K (\frac{\Delta N(T_{j})^{\tau_j}\Delta\Lambda(T_{j})^{s_j}}{\Delta \Lambda^l_{j}}-1)(\Delta\Lambda'_{j})\right\}.\numberthis\label{eqn:inner-product}
\end{align*}

We do not claim that~\eqref{eqn:inner-product} is a valid inner product, but it is closely related to the inner product from \cite{wu2016convergence,balakrishnan2017statistical}, the $Q$-function's directional derivative in the direction of a parameter vector, while~\eqref{eqn:inner-product} is the $Q$-function's right Gateaux derivative in the direction of a mean function, analogous in function space. The connection to inner products from \cite{balakrishnan2017statistical,wu2016convergence} is key to using existing EM theory to prove our Lemmas \ref{lemma:gradient-stability} and \ref{lemma:almost-strong-concavity}. We prove Equation (\ref{eqn:inner-product}) in \ref{proof:inner-product} in the supplementary material.

\subsection{Population Theory}\label{sec:pop-theory}
Before stating our main population EM convergence theorem, we define important constants and state two lemmas for the population $Q$-function.  Lemmas \ref{lemma:gradient-stability} and \ref{lemma:almost-strong-concavity} mirror gradient stability and local uniform strong concavity from \cite{wu2016convergence}, respectively, but they studied $Q$-functions with quadratic dependence on parameters. Our objective function does not have quadratic dependence on the mean function. However, we can decompose our $Q$-function into a sum of two terms: one locally uniformly strongly concave, and one strictly concave. The sum is then locally uniformly strongly concave. Lemma \ref{lemma:almost-strong-concavity} shows this lower bound holds, and is a key step that allows us to apply \cite{wu2016convergence} to our setting. We then prove Proposition~\ref{prop:population-contraction} which states that population EM steps contract, bringing estimates closer to the true mean function after each iteration.

\begin{definition}
Let $c\equiv \inf \Delta \Lambda^*>0$ be a uniform lower bound on increments of the true mean function.  This exists by assumptions \ref{assm:bounded-derivatives} and \ref{assm:alpha-separated}.
\end{definition}
\begin{definition}
Let $b\equiv \sup \{\Delta \Lambda:\Lambda\in B_r(\Lambda^*)\}>0$ be a uniform upper bound on increments of mean functions in $B_r(\Lambda^*)$.  Note $b\leq U_{\textrm{all}}$, where $U_{\textrm{all}}$ is a uniform upper bound on functions in $\Theta$. 
\end{definition}
\begin{definition}
Let $\gamma=\frac{\epsilon}{c}$ and $\nu=\frac{1-\epsilon}{3b}$
\end{definition}
\begin{lemma}\label{lemma:gradient-stability}
(Gradient Stability) Assume assumptions \ref{assm:counting-obs-process-ind}, \ref{assm:random-observation-times}, \ref{assm:bounded-number-observations}, \ref{assm:bounded-mean-function}, \ref{assm:bounded-derivatives}, \ref{assm:alpha-separated}, and \ref{assm:MCAR} hold.  Then for  $\Lambda,\Lambda'\in \Theta$ (proof in \ref{proof:gradient-stability}),
\begin{align*}
    \langle \nabla Q(\Lambda^*|\Lambda)-\nabla Q(\Lambda^*|\Lambda^*),\Lambda'-\Lambda^*\rangle
    &\leq  \gamma \Vert \Lambda-\Lambda^*\Vert \Vert \Lambda'-\Lambda^*\Vert.
\end{align*}
\end{lemma}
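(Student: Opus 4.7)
My plan is to substitute directly into the inner product formula from Equation~\eqref{eqn:inner-product}, exploit the fact that the observed (non-missing) terms cancel identically, and then reduce the remaining expression to the $d_2$ norm via Cauchy--Schwarz. Writing out both gradients at $\Lambda^*$ and using linearity in the direction $\Lambda'-\Lambda^*$, the subtracted expression is
\begin{align*}
\langle \nabla Q(\Lambda^*|\Lambda)-\nabla Q(\Lambda^*|\Lambda^*),\Lambda'-\Lambda^*\rangle
= P\!\left\{\sum_{j=1}^K \frac{\Delta N_j^{\tau_j}\bigl(\Delta\Lambda_j^{s_j}-(\Delta\Lambda^*_j)^{s_j}\bigr)}{\Delta\Lambda^*_j}(\Delta\Lambda'_j-\Delta\Lambda^*_j)\right\}.
\end{align*}
The first key observation is that on coordinates with $s_j=0$ (equivalently $\tau_j=1$), the factor $\Delta\Lambda_j^{s_j}-(\Delta\Lambda^*_j)^{s_j}=1-1=0$, so these summands vanish; on coordinates with $s_j=1$ we have $\tau_j=0$, so $\Delta N_j^{\tau_j}=1$ and the factor collapses to $\Delta\Lambda_j-\Delta\Lambda^*_j$. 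The whole expression therefore telescopes to $P\{\sum_j s_j\,(\Delta\Lambda_j-\Delta\Lambda^*_j)(\Delta\Lambda'_j-\Delta\Lambda^*_j)/\Delta\Lambda^*_j\}$.

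Next I would use Assumption~\ref{assm:MCAR}: since $s_j\sim\mathrm{Bernoulli}(\epsilon)$ independently of $K$, the observation times, and the counting process, conditioning on $(K,T)$ and taking the inner expectation pulls out $\epsilon$, yielding
\begin{align*}
\epsilon\cdot \mathbb{E}\!\left[\sum_{j=1}^K \frac{(\Delta\Lambda_j-\Delta\Lambda^*_j)(\Delta\Lambda'_j-\Delta\Lambda^*_j)}{\Delta\Lambda^*_j}\right].
\end{align*}
I would then apply Cauchy--Schwarz in $L^2(\mu_2)$ (with weight $1/\Delta\Lambda^*_j$) to split the product, use the uniform lower bound $\Delta\Lambda^*_j\geq c$ from the definition of $c$ to replace each $1/\Delta\Lambda^*_j$ by $1/c$, and finally identify each resulting factor with the $d_2$ norm via $\int\!\!\int|(\Lambda_1(v)-\Lambda_1(u))-(\Lambda_2(v)-\Lambda_2(u))|^2 d\mu_2(u,v)=\mathbb{E}[\sum_j(\Delta\Lambda_{1,j}-\Delta\Lambda_{2,j})^2]$. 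This chain delivers the bound $(\epsilon/c)\|\Lambda-\Lambda^*\|\|\Lambda'-\Lambda^*\|=\gamma\|\Lambda-\Lambda^*\|\|\Lambda'-\Lambda^*\|$.

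The main obstacle is mostly notational bookkeeping: one must track the interlocking exponent conventions (recall $\tau_j\in\{\circ,1\}$ with $\circ=0$ and $s_j=1-\tau_j$) to see the crisp cancellation of the $\tau_j=1$ coordinates. Once this is in hand, MCAR is precisely strong enough to pull out a clean factor of $\epsilon$ (without this independence the bound would carry residual dependence on the joint law of $s$ with $(T,N)$, which would spoil the linear contraction constant). The integrability required to swap expectation and summation is routine given Assumption~\ref{assm:bounded-number-observations}, and the $c$-lower bound plus Cauchy--Schwarz produce the inverse scaling $1/c$ that together with $\epsilon$ yields the constant $\gamma=\epsilon/c$ governing gradient stability.
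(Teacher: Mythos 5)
Your proposal is correct and follows essentially the same route as the paper's proof: reduce the difference of Gateaux derivatives to the sum over missing coordinates $P\{\sum_j s_j(\Delta\Lambda_j-\Delta\Lambda^*_j)(\Delta\Lambda'_j-\Delta\Lambda^*_j)/\Delta\Lambda^*_j\}$, invoke MCAR (via Fubini, after checking integrability) to extract the factor $\epsilon$, then apply Cauchy--Schwarz together with the lower bound $\Delta\Lambda^*_j\geq c$ and the identification of $\mathbb{E}[\sum_j(\cdot)^2]$ with the $d_2$ norm to obtain $\gamma=\epsilon/c$. The only cosmetic difference is how the weight $1/\Delta\Lambda^*_j$ is allocated between the two Cauchy--Schwarz factors, which does not affect the constant.
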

\begin{lemma}\label{lemma:almost-strong-concavity}
(Local Uniform Strong Concavity) Assume all assumptions hold. Then if $r\leq\frac{c}{4}$ and $\Lambda',\Lambda\in B_r(\Lambda^*)$ (proof in \ref{proof:almost-strong-concavity}),
\begin{align*}
    Q(\Lambda^*|\Lambda)-Q(\Lambda'|\Lambda)+\langle \nabla Q(\Lambda^*|\Lambda),\Lambda'-\Lambda^*\rangle\geq \nu\Vert \Lambda'-\Lambda^*\Vert^2.
\end{align*}
\end{lemma}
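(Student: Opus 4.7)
The plan is to reduce Lemma~\ref{lemma:almost-strong-concavity} to a pointwise scalar Taylor estimate applied termwise to the summands of $Q(\Lambda'|\Lambda)$, and then to translate the resulting quadratic remainder into the $d_2$-norm via the localization hypothesis $r\leq c/4$.

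First I would rewrite $Q(\Lambda'|\Lambda)=P\{\sum_{j=1}^K[a_j\log\Delta\Lambda'_j-\Delta\Lambda'_j]\}$ with $a_j:=\Delta N_j^{\tau_j}\Delta\Lambda^{s_j}(T_j)$ (which does not depend on $\Lambda'$), and apply the mean-value form of the second-order Taylor remainder to the scalar map $h(x)=a_j\log x-x$ (satisfying $h''(x)=-a_j/x^2$) expanded at $x^*=\Delta\Lambda^*_j$ and evaluated at $x'=\Delta\Lambda'_j$. This yields, for each $j$, some $\xi_j$ between $\Delta\Lambda^*_j$ and $\Delta\Lambda'_j$ with
\[h(\Delta\Lambda^*_j)-h(\Delta\Lambda'_j)+h'(\Delta\Lambda^*_j)(\Delta\Lambda'_j-\Delta\Lambda^*_j)=\frac{a_j}{2\xi_j^2}(\Delta\Lambda'_j-\Delta\Lambda^*_j)^2.\]
Summing over $j$, integrating against $P$, and recognizing the linear term as (\ref{eqn:inner-product}) evaluated in the direction $\Lambda'-\Lambda^*$ yields the identity
\begin{align*}
Q(\Lambda^*|\Lambda)-Q(\Lambda'|\Lambda)+\langle\nabla Q(\Lambda^*|\Lambda),\Lambda'-\Lambda^*\rangle \;=\; P\left\{\sum_{j=1}^K\frac{a_j}{2\xi_j^2}(\Delta\Lambda'_j-\Delta\Lambda^*_j)^2\right\}.
\end{align*}

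Next I would condition on the observation times $T$. Both $\xi_j$ and $\Delta\Lambda'_j-\Delta\Lambda^*_j$ are $T$-measurable, while by assumptions \ref{assm:counting-obs-process-ind} and \ref{assm:MCAR} we have $\mathbb{E}[a_j\mid T]=(1-\epsilon)\Delta\Lambda^*_j+\epsilon\Delta\Lambda_j\geq(1-\epsilon)\Delta\Lambda^*_j$; so it suffices to lower bound $\mathbb{E}[a_j\mid T]/\xi_j^2$. To do so I would apply \emph{two} simultaneous upper bounds on $\xi_j$: the global one, $\xi_j\leq b$, and the local one, $\xi_j\leq\max(\Delta\Lambda^*_j,\Delta\Lambda'_j)\leq\Delta\Lambda^*_j+2r\leq\tfrac{3}{2}\Delta\Lambda^*_j$, where the last step combines $r\leq c/4$ with $\Delta\Lambda^*_j\geq c$. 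Factoring $\xi_j^2=\xi_j\cdot\xi_j$ and using each ceiling once gives $\xi_j^2\leq\tfrac{3b}{2}\Delta\Lambda^*_j$, producing the cancellation
\begin{align*}
\frac{\mathbb{E}[a_j\mid T]}{\xi_j^2}\;\geq\;\frac{(1-\epsilon)\Delta\Lambda^*_j}{(3b/2)\Delta\Lambda^*_j}\;=\;\frac{2(1-\epsilon)}{3b}\;=\;2\nu.
\end{align*}
Multiplying by $\tfrac{1}{2}(\Delta\Lambda'_j-\Delta\Lambda^*_j)^2$, summing, and identifying $\|\Lambda'-\Lambda^*\|^2=\mathbb{E}[\sum_j(\Delta\Lambda'_j-\Delta\Lambda^*_j)^2]$ from the definition of the $d_2$ metric closes the proof.

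The main obstacle is controlling the random Taylor coefficient $a_j/\xi_j^2$ uniformly over $\Lambda,\Lambda'\in B_r(\Lambda^*)$: a crude bound $\xi_j\leq b$ alone would give only a $c/b^2$-type rate that degenerates whenever the true mean function has small increments, and would miss the claimed $1/b$ scaling inside $\nu$. The factorization of $\xi_j^2$ into one locally-controlled factor and one globally-controlled factor is the trick that lets $\Delta\Lambda^*_j$ in $\mathbb{E}[a_j\mid T]$ cancel one factor of $\xi_j$ in the denominator. This is also precisely why the hypothesis $r\leq c/4$ appears: it is exactly the localization strength needed to keep $\Delta\Lambda'_j$ within a factor $3/2$ of $\Delta\Lambda^*_j$, and hence to make the local ceiling on $\xi_j$ effective.
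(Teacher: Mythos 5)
Your proof is correct, but it takes a genuinely different route from the paper's argument in \ref{proof:almost-strong-concavity}. The paper first uses MCAR and Fubini to split $Q=(1-\epsilon)Q_1+\epsilon Q_2$ into an observed-count part and an imputed part; it writes the $Q_1$ remainder exactly as $P\{\sum_j\Delta\Lambda'_j\,h(\Delta\Lambda^*_j/\Delta\Lambda'_j)\}$ with $h(x)=x(\log x-1)+1$, lower-bounds $h$ by a quadratic via Claim \ref{claim:term-I-inequality} (which is where $r\leq c/4$ enters), uses $\Delta\Lambda'_j\leq b$ to get the $\tfrac{1}{3b}$ factor, and simply discards $Q_2$ by its concavity, yielding $\nu=(1-\epsilon)/(3b)$. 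You instead keep $Q$ whole, perform a single mean-value Taylor expansion with the random second-order coefficient $a_j/(2\xi_j^2)$, and recover the constant by splitting the two factors of $\xi_j$ --- one bounded globally by $b$, one locally by $\tfrac32\Delta\Lambda^*_j$ --- so that $\Delta\Lambda^*_j$ from $\mathbb{E}[a_j\mid T]\geq(1-\epsilon)\Delta\Lambda^*_j$ cancels; dropping the $\epsilon\Delta\Lambda_j$ contribution plays exactly the role of the paper's discarding of $Q_2$, and your localization $|\Delta\Lambda'_j-\Delta\Lambda^*_j|\leq 2r\leq\tfrac12\Delta\Lambda^*_j$ is the same estimate used inside Claim \ref{claim:term-I-inequality}. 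Both routes land on the identical constant $\nu$, and yours is arguably tighter bookkeeping (the paper's intermediate bound $\phi''(\xi)/2\geq 1/3$ is loose on the stated range of $\xi$, whereas your cancellation is exact). Two points to tidy up: the mean-value point $\xi_j$ is \emph{not} $T$-measurable as you assert (it depends on $a_j$ through the function being expanded), but the argument survives because the bound $\xi_j^2\leq b\cdot\tfrac32\Delta\Lambda^*_j$ holds surely for every point between $\Delta\Lambda^*_j$ and $\Delta\Lambda'_j$, so you can bound $a_j/(2\xi_j^2)\geq a_j/(3b\,\Delta\Lambda^*_j)$ pointwise \emph{before} conditioning on $T$; and you should state the integrability needed to interchange $P$ with the termwise expansion and the tower property, which the paper verifies explicitly. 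Note also that the paper's $Q_1/Q_2$ decomposition is reused later to prove the separation condition in \ref{proof:rate-of-convergence}, so the split buys some economy downstream that your unified expansion would need to replicate.
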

One can think of this similarly to a second order Taylor expansion, where we're taking the function of $\Lambda'$ expanded at $\Lambda^*$ conditional on $\Lambda$, and using that to derive an inequality.
\begin{proposition}\label{prop:population-contraction}
(Population EM Contraction) Assume all assumptions hold. If $r\leq \frac{c}{4}$, $\Lambda',\Lambda\in B_r(\Lambda^*)$ and $Q(\Lambda'|\Lambda)\geq Q(\Lambda^*|\Lambda)$, then
\begin{align*}
    \Vert \Lambda'-\Lambda^*\Vert &\leq \frac{\gamma}{\nu}\Vert \Lambda-\Lambda^*\Vert.
\end{align*}
\end{proposition}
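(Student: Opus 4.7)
The plan is to combine the two lemmas in a textbook EM-contraction argument, analogous to the parametric case in \cite{balakrishnan2017statistical,wu2016convergence}, but taking care that the ``inner product'' in \eqref{eqn:inner-product} is actually a right Gateaux derivative on a constrained convex set of mean functions. The three ingredients I will use are (i) local uniform strong concavity from Lemma \ref{lemma:almost-strong-concavity}, (ii) a self-consistency identity showing $\langle \nabla Q(\Lambda^*|\Lambda^*), \Lambda'-\Lambda^*\rangle \leq 0$ for feasible $\Lambda'$, and (iii) gradient stability from Lemma \ref{lemma:gradient-stability}.

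First I would instantiate Lemma \ref{lemma:almost-strong-concavity} at $\Lambda', \Lambda \in B_r(\Lambda^*)$, which is legitimate since $r \leq c/4$. This gives
\begin{align*}
\nu\,\Vert \Lambda' - \Lambda^*\Vert^2 \;\leq\; Q(\Lambda^*|\Lambda) - Q(\Lambda'|\Lambda) + \langle \nabla Q(\Lambda^*|\Lambda), \Lambda' - \Lambda^*\rangle.
\end{align*}
The hypothesis $Q(\Lambda'|\Lambda) \geq Q(\Lambda^*|\Lambda)$ kills the first two terms (making that difference nonpositive), leaving the one-sided bound $\nu\,\Vert \Lambda'-\Lambda^*\Vert^2 \leq \langle \nabla Q(\Lambda^*|\Lambda), \Lambda'-\Lambda^*\rangle$. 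This is the shape we need in order to invoke gradient stability.

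Next I would insert and subtract $\nabla Q(\Lambda^*|\Lambda^*)$ to write $\langle \nabla Q(\Lambda^*|\Lambda), \Lambda'-\Lambda^*\rangle = \langle \nabla Q(\Lambda^*|\Lambda) - \nabla Q(\Lambda^*|\Lambda^*), \Lambda'-\Lambda^*\rangle + \langle \nabla Q(\Lambda^*|\Lambda^*), \Lambda'-\Lambda^*\rangle$. The first term is bounded by $\gamma\Vert \Lambda-\Lambda^*\Vert \Vert \Lambda'-\Lambda^*\Vert$ directly via Lemma \ref{lemma:gradient-stability}. For the second, I would evaluate the Gateaux derivative at $(\Lambda^*,\Lambda^*)$ using the explicit formula in \eqref{eqn:inner-product}: the integrand becomes $(\Delta N_j^{\tau_j}\Delta\Lambda^*(T_j)^{s_j}/\Delta\Lambda^*_j - 1)\Delta\Lambda'_j$. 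Taking expectation over the Bernoulli missingness indicators yields $((1-\epsilon)\Delta N_j + \epsilon\Delta\Lambda^*_j)/\Delta\Lambda^*_j - 1$, and then using assumption \ref{assm:counting-obs-process-ind} together with $\mathbb{E}[\Delta N_j|T] = \Delta\Lambda^*_j$ kills the whole expression, so $\langle \nabla Q(\Lambda^*|\Lambda^*), \Lambda'-\Lambda^*\rangle = 0$. Note this step only uses that the true mean is $\Lambda^*$, not that $N$ is truly Poisson, which is consistent with the paper's robustness claim.

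Assembling the pieces gives $\nu\Vert \Lambda'-\Lambda^*\Vert^2 \leq \gamma \Vert \Lambda-\Lambda^*\Vert \Vert \Lambda'-\Lambda^*\Vert$, and dividing through by $\nu\Vert \Lambda'-\Lambda^*\Vert$ (the trivial case $\Lambda'=\Lambda^*$ is immediate) yields the claim. The main obstacle I expect is the self-consistency computation for $\nabla Q(\Lambda^*|\Lambda^*)$: because we are on a constrained convex set of monotone functions rather than an open Euclidean ball, I cannot just invoke a first-order condition; instead I must directly use the explicit formula \eqref{eqn:inner-product} together with $\mathbb{E}[\Delta N_j|T]=\Delta\Lambda^*_j$. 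Everything else is bookkeeping on top of the two preceding lemmas.
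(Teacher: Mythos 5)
Your proof is correct and follows essentially the same route as the paper: combine the hypothesis $Q(\Lambda'|\Lambda)\geq Q(\Lambda^*|\Lambda)$ with Lemma~\ref{lemma:almost-strong-concavity}, split the Gateaux derivative by adding and subtracting $\nabla Q(\Lambda^*|\Lambda^*)$, bound the difference term by Lemma~\ref{lemma:gradient-stability}, and divide by $\Vert\Lambda'-\Lambda^*\Vert$. The only (harmless) divergence is the cross term: you compute $\langle\nabla Q(\Lambda^*|\Lambda^*),\Lambda'-\Lambda^*\rangle=0$ directly from \eqref{eqn:inner-product} using MCAR and $\mathbb{E}[\Delta N_j\mid T]=\Delta\Lambda^*_j$, whereas the paper only shows it is $\leq 0$ via first-order optimality of $\Lambda^*$ for $Q(\cdot|\Lambda^*)$ along feasible directions of the convex set $\Theta$ --- both suffice, and your direct computation neatly sidesteps the paper's caveat that this KKT-type condition must be verified by hand because $\langle\cdot,\cdot\rangle$ is not a genuine inner product.
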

See section \ref{proof:population-contractivity} in the supplementary material for a detailed proof. This is similar to Proposition 3.2 in \cite{wu2016convergence}. In order for this to give a contraction, we need $\gamma<\nu$, which holds if $\epsilon<\frac{c}{3b+c}$. Thus if the uniform lower bound on increments of the true mean function goes up or the uniform upper bound on increments in the ball $B_r(\Lambda^*)$ goes down, we can tolerate a higher probability of missing data.

\begin{theorem}\label{thm:population-convergence}
(Population EM Convergence to True Mean Function) Suppose the assumptions of the above hold and $0<\gamma<\nu$.  Take the EM sequence $\Lambda^{(t+1)}=\arg\max_{\Lambda'\in \Theta\cap B_r(\Lambda^*)}Q(\Lambda'|\Lambda^{(t)})$ and $\Lambda^0\in B_r(\Lambda^*)\cap \Theta$.  Then
\begin{align*}
    \Vert \Lambda^{(t)}-\Lambda^*\Vert &\leq \left(\frac{\gamma}{\nu}\right)^t\Vert \Lambda^{(0)}-\Lambda^*\Vert.
\end{align*}
\end{theorem}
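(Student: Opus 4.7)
The plan is to prove the bound by a straightforward induction on $t$, with Proposition~\ref{prop:population-contraction} providing the one-step contraction. The key point is that all the work has already been done at the level of a single EM step, so the theorem reduces to iterating the contraction while verifying the hypotheses of the proposition hold at every iteration.

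First, I would verify the inductive hypothesis: at each step the hypotheses of Proposition~\ref{prop:population-contraction} are in force. By construction, the update $\Lambda^{(t+1)} = \arg\max_{\Lambda' \in \Theta \cap B_r(\Lambda^*)} Q(\Lambda'|\Lambda^{(t)})$ automatically lies in $\Theta \cap B_r(\Lambda^*)$, so both iterates $\Lambda^{(t)}$ and $\Lambda^{(t+1)}$ are in $B_r(\Lambda^*)$. Moreover, since $\Lambda^* \in \Theta$ and $\Lambda^* \in B_r(\Lambda^*)$ trivially, $\Lambda^*$ is a feasible candidate in the maximization, so $Q(\Lambda^{(t+1)}|\Lambda^{(t)}) \geq Q(\Lambda^*|\Lambda^{(t)})$. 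These are precisely the three preconditions of Proposition~\ref{prop:population-contraction}, which therefore yields
\begin{equation*}
    \Vert \Lambda^{(t+1)} - \Lambda^* \Vert \leq \frac{\gamma}{\nu}\, \Vert \Lambda^{(t)} - \Lambda^* \Vert.
\end{equation*}

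Second, I would set up the induction. The base case $t=0$ holds as an equality by assumption $\Lambda^{(0)} \in \Theta \cap B_r(\Lambda^*)$. For the inductive step, assuming $\Vert \Lambda^{(t)} - \Lambda^* \Vert \leq (\gamma/\nu)^t \Vert \Lambda^{(0)} - \Lambda^* \Vert$, the contraction above gives
\begin{equation*}
    \Vert \Lambda^{(t+1)} - \Lambda^* \Vert \leq \frac{\gamma}{\nu}\, \Vert \Lambda^{(t)} - \Lambda^* \Vert \leq \left(\frac{\gamma}{\nu}\right)^{t+1} \Vert \Lambda^{(0)} - \Lambda^* \Vert,
\end{equation*}
completing the induction. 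The assumption $0 < \gamma < \nu$ is used only to assert $\gamma/\nu \in (0,1)$, which makes the geometric factor a genuine contraction and the bound informative as $t \to \infty$.

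The main subtlety — rather than an obstacle — is the interplay between the two norms involved. The ball $B_r(\Lambda^*)$ is defined in $\Vert \cdot \Vert_\infty$, while the contraction bound is in the $d_2$ metric $\Vert \cdot \Vert$. This does not cause difficulty here because membership in $B_r(\Lambda^*)$ at each iteration is guaranteed by the constraint in the $\arg\max$, not by the $d_2$ bound being inherited across iterates. In other words, the sup-norm containment and the $d_2$ contraction can be tracked independently: the former propagates by construction of $M$, the latter by Proposition~\ref{prop:population-contraction}. With both tracked, the induction closes.
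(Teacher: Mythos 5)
Your proof is correct and follows essentially the same route as the paper: induction on $t$, with Proposition~\ref{prop:population-contraction} supplying the one-step contraction, noting that the iterates stay in $\Theta \cap B_r(\Lambda^*)$ by construction of the constrained $\arg\max$ and that feasibility of $\Lambda^*$ gives $Q(\Lambda^{(t+1)}|\Lambda^{(t)}) \geq Q(\Lambda^*|\Lambda^{(t)})$. Your additional remarks on the interplay between the sup-norm ball and the $d_2$ contraction are a useful clarification but do not change the argument.
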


The proof is in \ref{proof:population-convergence} in the supplementary material.  At each EM step, we move towards the true function by a multiplicative factor of $\left(\frac{\gamma}{\nu}\right)$.  This is similar to Theorem 3.1 in \cite{wu2016convergence}.

\subsection{Sample Theory}
We next discuss convergence of the sample EM algorithm.  We again require the initial estimate $\Lambda^0$ and subsequent estimates $\Lambda^{(t)}$ to be close, i.e., in the set $B_r(\Lambda^*)\cap \Theta_n$.  The key additional assumption is that the sample size is large enough to satisfy certain conditions. For all $n$ greater than this minimum sample size, we can guarantee approximate contraction of the sample EM algorithm, where the term that does not contract goes to $0$ in large samples.

\begin{proposition}\label{prop:rate-of-convergence}
Suppose all assumptions hold. Assume $B_r(\Lambda^*)$ has radius $r\leq\frac{c}{4}$. For any $L>0$, there exists $u_L\overset{L\rightarrow\infty}{\rightarrow} 0$ such that w.p. $1-u_L$
\begin{align*}
    \Vert M_n(\Lambda)-M(\Lambda)\Vert\leq 2^L n^{-1/3}.
\end{align*}
\end{proposition}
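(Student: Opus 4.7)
The plan is to apply a standard sieve $M$-estimator rate-of-convergence argument. The cube-root rate $n^{-1/3}$ will emerge because $B_r(\Lambda^*)\cap\Theta_n$ is a uniformly bounded class of monotone functions, which by Theorem 2.7.5 of van der Vaart and Wellner has $L_2$-bracketing entropy $\log N_{[]}(\epsilon,\cdot,\|\cdot\|_2)\lesssim 1/\epsilon$. Since $r\leq c/4$ forces $\Delta\Lambda'\geq c/2$ uniformly, the $\log\Delta\Lambda'(T_j)$ terms in $Q_n$ are Lipschitz in $\Lambda'$, so the class of $Q$-integrand functions inherits the same bracketing-entropy rate with a square-integrable envelope (via assumptions \ref{assm:bounded-number-observations} and \ref{assm:unif-bounded-mgf}). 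The imputation weights $\Delta\Lambda^{s_j}(T_j)$ are fixed (since $\Lambda$ is fixed) and bounded, so they preserve this entropy.

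First, I would establish a local quadratic curvature inequality at $M(\Lambda)$: for $\Lambda'\in B_r(\Lambda^*)\cap\Theta$,
\begin{align*}
    Q(M(\Lambda)|\Lambda) - Q(\Lambda'|\Lambda) \;\geq\; \nu'\,\|\Lambda'-M(\Lambda)\|^2,
\end{align*}
by reworking the proof of Lemma \ref{lemma:almost-strong-concavity} expanded around $M(\Lambda)$ (which lies in $B_r(\Lambda^*)$ by Proposition \ref{prop:population-contraction}) and using first-order optimality at $M(\Lambda)$ to cancel the linear term. Picking a sieve approximation $\Pi_n M\in\Theta_n$ of $M(\Lambda)$ with error $o(n^{-1/3})$ (available by density of $\cup_n\Theta_n$), optimality of $M_n(\Lambda)$ combined with this curvature yields the basic inequality
\begin{align*}
    \nu'\,\|M_n(\Lambda)-M(\Lambda)\|^2 \;\leq\; \bigl|(Q_n-Q)(M_n(\Lambda)|\Lambda) - (Q_n-Q)(M(\Lambda)|\Lambda)\bigr| + o(n^{-2/3}).
\end{align*}

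Next, I would apply the localized modulus-of-continuity bound (Theorem 3.4.1 of van der Vaart and Wellner) with the $1/\epsilon$ bracketing entropy to obtain $\mathbb{E}^*\sup_{\|\Lambda'-M(\Lambda)\|\leq\delta}\bigl|(Q_n-Q)(\Lambda'|\Lambda)-(Q_n-Q)(M(\Lambda)|\Lambda)\bigr|\lesssim \delta^{1/2}/\sqrt{n}$. Balancing $\delta^2\asymp \delta^{1/2}/\sqrt{n}$ gives the in-probability rate $\delta\asymp n^{-1/3}$. For the explicit $2^L$ tail, I would peel: partition $\{\|M_n(\Lambda)-M(\Lambda)\|>2^L n^{-1/3}\}$ into shells $S_j=\{2^j n^{-1/3}<\|M_n(\Lambda)-M(\Lambda)\|\leq 2^{j+1} n^{-1/3}\}$, $j\geq L$, and apply a Talagrand-type concentration inequality on each shell with the same entropy bound. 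The resulting shell probabilities decay at least as $\exp(-c'\,2^{aj})$ for some $a>0$, and summing the geometric tail gives $u_L=\sum_{j\geq L}\mathbb{P}(S_j)\to 0$ as $L\to\infty$.

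The main obstacle is the empirical-process step: rigorously verifying the $1/\epsilon$ bracketing-entropy bound for the full $Q$-integrand class (with the log term and imputation weights), establishing envelope integrability using the exponential-moment assumption \ref{assm:unif-bounded-mgf}, and then tracking constants through the peeling argument so the exponent in the shell bound is sharp enough to give $u_L\to 0$ at the claimed $2^L n^{-1/3}$ scaling. Everything else is a fairly mechanical application of the $M$-estimation template once the curvature inequality from Step 1 is in place.
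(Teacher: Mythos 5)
Your plan matches the paper's proof essentially step for step: a quadratic separation bound at $M(\Lambda)$ obtained from a second-order functional expansion of $Q_1$ and $Q_2$ plus first-order optimality, the $1/\eta$ bracketing entropy of uniformly bounded monotone classes (van der Vaart--Wellner Theorem 2.7.5) fed into a bracketing maximal inequality to get a $\delta^{1/2}$ modulus of continuity, balancing against $\delta^2$ to obtain the $n^{-1/3}$ rate, and a peeling argument for the explicit $2^L$ tail. The only deviations are minor: the paper bounds the shell probabilities by Markov's inequality applied to the expected supremum (via a finite-sample rate-of-convergence theorem, yielding $u_L=\tilde c\sum_{j>L}2^{-3j/2}$) rather than a Talagrand-type concentration bound, and it sidesteps your sieve-approximation error term by observing that for step functions (or splines) with jumps at the observation times the sieve maximizer coincides with the maximizer over the full function space.
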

See \ref{proof:rate-of-convergence} of the supplementary material for proof, as well as a definition for $u_L$. This is the rate of convergence an M-estimator, and generalizes the rate of convergence of \cite{balakrishnan2011class} from maximum likelihood estimates to M-steps. We now state our main result.

\begin{theorem}\label{thm:sample-convergence}
Suppose $0<r\leq \frac{c}{4}$ and $0<\gamma<\nu$ and all assumptions hold such that the population contractivity holds. Let $\kappa=\frac{\gamma}{\nu}$. Take the EM sequence $\Lambda^{(t+1)}=\arg\max_{\Lambda'\in B_r(\Lambda^*)\cap \Theta_n}Q_n(\Lambda'|\Lambda^{(t)})$ and $\Lambda^0\in B_r(\Lambda^*)\cap \Theta$.  Then if the sample size is large enough that $2^L n^{-1/3}\leq (1-\kappa)r$, then w.p. at least $1-u_L$
\begin{align*}
    \Vert \Lambda^{(t)}-\Lambda^*\Vert &\leq \kappa^t\Vert \Lambda^0-\Lambda^*\Vert+\frac{1}{1-\kappa}2^L n^{-1/3}.
\end{align*}
\end{theorem}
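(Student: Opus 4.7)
The plan is a clean induction on $t$, combining the deterministic population contraction from Proposition \ref{prop:population-contraction} with the high-probability sample-to-population M-step approximation from Proposition \ref{prop:rate-of-convergence}, and then unrolling the resulting affine recursion in $\|\Lambda^{(t)} - \Lambda^*\|$.

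First, I would fix a single favorable event $\mathcal{E}$ on which the bound $\|M_n(\Lambda) - M(\Lambda)\| \leq 2^L n^{-1/3}$ holds uniformly for all $\Lambda \in B_r(\Lambda^*) \cap \Theta_n$. Proposition \ref{prop:rate-of-convergence} furnishes $\mathbb{P}(\mathcal{E}) \geq 1 - u_L$; interpreting that proposition as providing uniform (not merely pointwise) control on the ball is essential, since the same event must support every iteration. Everything downstream is then a purely deterministic argument on $\mathcal{E}$.

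Second, I would show by induction that $\Lambda^{(t)} \in B_r(\Lambda^*)$ for every $t \geq 0$, which is what licenses repeated application of Proposition \ref{prop:population-contraction}. The base case is given by hypothesis. For the inductive step, inserting $M(\Lambda^{(t)})$ and applying the triangle inequality gives
\begin{align*}
\|\Lambda^{(t+1)} - \Lambda^*\|
&\leq \|M_n(\Lambda^{(t)}) - M(\Lambda^{(t)})\| + \|M(\Lambda^{(t)}) - \Lambda^*\| \\
&\leq 2^L n^{-1/3} + \kappa \, \|\Lambda^{(t)} - \Lambda^*\|,
\end{align*}
using $\mathcal{E}$ for the first term and population contractivity for the second. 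Combining $\|\Lambda^{(t)} - \Lambda^*\| \leq r$ with the sample-size condition $2^L n^{-1/3} \leq (1 - \kappa) r$ yields $\|\Lambda^{(t+1)} - \Lambda^*\| \leq \kappa r + (1 - \kappa) r = r$, closing the induction and keeping us inside the ball.

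Third, the displayed inequality is an affine recursion $a_{t+1} \leq \kappa \, a_t + 2^L n^{-1/3}$ with $a_t \equiv \|\Lambda^{(t)} - \Lambda^*\|$. Iterating and summing the geometric series gives
\begin{align*}
a_t \;\leq\; \kappa^t a_0 + 2^L n^{-1/3} \sum_{j=0}^{t-1} \kappa^j \;\leq\; \kappa^t \, \|\Lambda^0 - \Lambda^*\| + \frac{1}{1-\kappa} \, 2^L n^{-1/3},
\end{align*}
which is exactly the claimed bound. The main obstacle is the implicit uniformity requirement in the first step: Proposition \ref{prop:rate-of-convergence} must yield a single event on which the M-step approximation holds at \emph{every} iterate $\Lambda^{(t)}$, rather than just at a fixed $\Lambda$. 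I would expect the underlying empirical process argument (analogous to standard uniform M-estimator rates) to already supply this; otherwise one would need a covering argument over $B_r(\Lambda^*)$, or a union bound across iterations that would degrade the probability $u_L$.
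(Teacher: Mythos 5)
Your proof is correct and follows essentially the same route as the paper's: insert $M(\Lambda^{(t)})$ via the triangle inequality, bound one term by population contractivity and the other by the uniform M-step rate from Proposition \ref{prop:rate-of-convergence}, then unroll the affine recursion into a geometric series. You are in fact slightly more explicit than the paper about two points it glosses over --- that a single high-probability event must support every iteration, and that the condition $2^L n^{-1/3}\leq(1-\kappa)r$ is what keeps the iterates in $B_r(\Lambda^*)$ --- both of which are handled correctly.
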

See section \ref{proof:sample-convergence} in the supplementary material for a proof. The proof follows that of Theorem 5 of \cite{balakrishnan2017statistical} closely. Note that this immediately implies that our algorithm can recover the true population MLE in large samples. Further by \cite{wellner2000two}, the population MLE is the true mean function under assumptions \ref{assm:counting-obs-process-ind},\ref{assm:bounded-mean-function} and \ref{assm:unif-bounded-mgf} even under Poisson process violations. Thus our method is robust to Poisson process violations.


\section{Experiments}

Here we show numerical and real data results to illustrate the utility of the proposed functional EM algorithm. In the M-step, we can choose any reasonable mean function estimator. In our experiments, we use a general likelihood-based augmented estimating equation (AEE) method \cite{wang2013augmented}, which uses monotone step functions when obtaining the mean function estimate with complete data. \cite{wang2011fitting,chiou2019semiparametric} provide software implementations of AEE and a wide variety of other panel count methods. First, we perform synthetic experiments that demonstrate accurate recovery of the true mean function (see Section \ref{sec:synthetic-appendix} of the supplemtary material). In particular, we also demonstrate good recovery of the true mean function using simulated data based on mixed poisson processes (which violate the Poisson process assumption) hence confirming theoretical robustness of functional EM to Poisson process assumption. In the following, we focus on two experiments involving real data. 




%



\subsection{Real Data with Synthetic Missingness}

We analyze blaTum (bladder tumor dataset) \cite{byar1980veterans} with 85 patients and counts of tumors taken at appointment times. We artificially delete intervals completely at random with probability $0.2$.  We then initialize $\Lambda^{(0)}$ by replacing the missing data with ${\sf Poisson}(1)$ random variables and fitting AEE.  We bootstrap 1,000 times, and plot the sample mean of our learned mean functions under complete data. We also set counts to zero in intervals with missing counts, which biases the mean function estimates. Figure \ref{fig:results}a compares inference from complete vs partially missing data using our wrapper vs zeroing out missing data.  Our wrapper learns a model much closer to the complete data than its initialization or the zeroing model. Section \ref{sec:bladder-tumor-appendix} of the supplementary material has additional experiments: we investigate sensitivity to initialization and different missingness probabilities. We also replace AEE with other M-step methods in Section \ref{sec:bladder-tumor-appendix} of the supplementary material and again confirm the utility of functional EM in approximating the mean function. The actual estimates vary slightly. In practice we recommend comparing results when making scientific conclusions; similar results from a wide variety of flexible methods indicate more robust scientific evidence.

\begin{figure}
    \centering
    \includegraphics[width=0.5\textwidth]{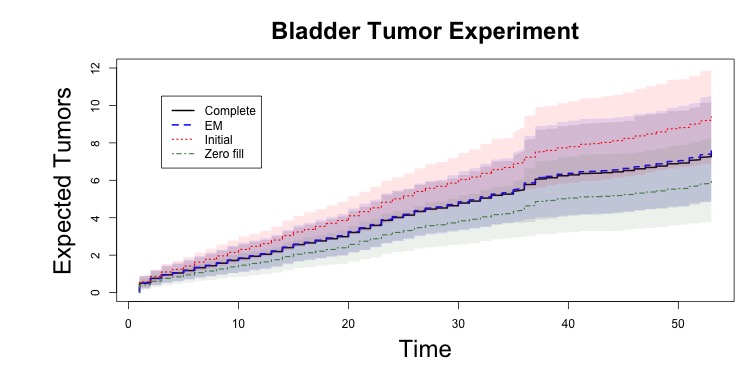}\includegraphics[width=0.5\textwidth]{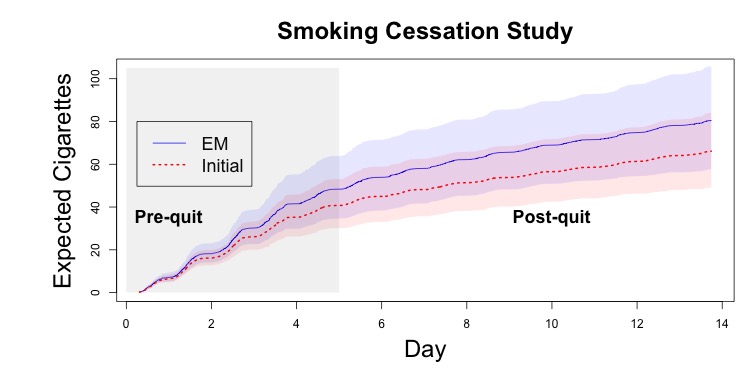}
    \caption{The estimated mean function and the 95\% bootstrap confidence band from \textbf{a)} bladder tumor dataset: synthetic 20\% missingness probability.  Initialization has missing data set to Poisson$(1)$ values.  \textbf{b)} smoking cessation study.  Initialization treats self-reported counts in intervals over 24 hours (about 5.6\% of observations) as valid.  }
    \label{fig:results}
\end{figure}


\subsection{Smoking Cessation Study}


We analyze data from an ongoing smoking cessation study in which the following EMA question was delivered randomly 3-4 times per day: "\textit{Since the last assessment, how many cigarettes did you smoke?}"  We have 125 participants tracked over 14 days, with 3-4 random EMAs targeted. The first five days are their normal smoking behavior, while the remaining days involve them attempting to quit. In both curves the smoking rates are much higher pre-quit than post-quit, suggesting that they do in fact attempt to quit. After discussion with psychologists, we treat counts in intervals of longer than 24 hours as missing because they are considered to be unreliable. We use the counts to initialize our model.  Unlike the previous experiments, we lack a ground truth. Note that this study is currently ongoing and we cannot link to it or share the dataset.

Figure \ref{fig:results}b shows the results based on $1,000$ non-parametric bootstrap samples and two models: one treating long intervals as valid (``initial"), and the other treating them as missing where we use EM.  Long intervals make up 5.6\% of observations.  The EM algorithm estimates that on average smokers attempting to quit smoked $80.4$ cigarettes by the end of the study; in contrast, AEE underestimated ($66.11$), a difference of 21.7\%. This is consistent with our collaborating behavioral scientists' hypothesis that participants may under-report a cigarette count when the gap between completed EMAs is large. The proposed functional EM is able to borrow count information across multiple shorter overlapping intervals from other participants to alleviate this under-reporting. We discuss other analyses of the dataset further in section \ref{sec:smoking-cessation-appendix} of the supplementary material.



\section{Discussion}

This paper proposed a functional EM algorithm to estimate the mean function for incomplete panel count data. Extending existing EM algorithm analysis to general non-parametric settings, we provided finite sample convergence guarantees to the truth. We conducted extensive experiments to illustrate the effectiveness of the proposed algorithm in recovering the true mean function. We applied the proposed algorithm to a smoking cessation study and found that participants may underestimate their cigarettes smoked over intervals longer than 24 hours. To the best of our knowledge, we are the first to apply panel count data methods to EMAs, despite their central role in behavioral science research. 

From a theoretical perspective, the main limitation is the MCAR assumption. Deviations from MCAR can happen if non-reporting depends on a subject's emotional state, which may be related to their smoking counts. Extensions to other missingness mechanisms such as Missing at Random (MAR) warrant future research; our analysis provides a general framework and first step, which can eventually be extended to also incorporate covariates under a more complex missing data model. Like some existing theoretical analysis of EM algorithms \cite{wu2016convergence}, another limitation is the need to optimize over a ball around the true mean function, which is unknown. Relaxing this condition is likely very challenging but also important future work. Finally, we would like test statistics for our estimator. This is challenging as the differences between fitted and true mean functions in norm do not converge to a normal distribution.  However, there are test statistics based on weighting \cite{balakrishnan2009new} that could be extended to this setting.

\section*{Broader Impact}

Understanding the dynamics for individuals who attempt to change and maintain behaviors to improve health has important societal value, for example, a comprehensive understanding of how smokers attempt to quit smoking may guide behavioral scientists to design better intervention strategies that tailor to the highest risk windows of relapse. Our theory and method provide a valid approach to understanding a particular aspect of the smoking behavior (mean function). The resulting algoithm is robust, readily adaptable and simple to implement, highlighting the potential for its wider adoption. The negative use case could be lack of sensitivity analysis around the assumptions such as missing data mechanism which may lead to misleading conclusions. Our current recommendation is to consult scientists about the plausibility of the assumption about missing data.

\bibliography{bibliography}
\bibliographystyle{acm}

\clearpage
\onecolumn
\appendix
\section{Proofs Related to Theorem \ref{thm:population-convergence}}\label{proof:population-convergence-general}

\subsection{Proof of Equality in Equation \ref{eqn:inner-product}}\label{proof:inner-product}
We start with the numerator

\begin{align*}
    Q(\Lambda^{(l)}+\eta\Lambda'|\Lambda)-Q(\Lambda^{(l)}|\Lambda)&=P(\sum_{j=1}^K \Delta N_j^{\tau_j}\Delta \Lambda_j^{s_j}\log(\Delta \Lambda^{(l)}_j+\eta\Delta \Lambda'_j)-(\Lambda^l_K+\eta\Lambda'_K))\\
    &\qquad-P(\sum_{j=1}^K \Delta N_j^{\tau_j}\Delta \Lambda_j^{s_j}\log(\Delta \Lambda^{(l)}_j)-(\Lambda^l_K)))\\
    &=P(\sum_{j=1}^K \Delta N_j^{\tau_j}\Delta \Lambda_j^{s_j}\log \frac{\Delta \Lambda^{(l)}_j+\eta\Delta \Lambda'_j}{\Delta \Lambda^{(l)}_j}-\eta \Lambda'_K)
\end{align*}

Now consider
\begin{align}
    \lim_{\eta\downarrow 0}\frac{\log \frac{\Delta \Lambda^{(l)}_j+\eta\Delta \Lambda'_j}{\Delta \Lambda^{(l)}_j}}{\eta}&=\lim_{\eta\downarrow 0}\log \left(\frac{\Delta \Lambda^{(l)}_j+\eta\Delta \Lambda'_j}{\Delta \Lambda^{(l)}_j}\right)^{1/\eta}\\
    &=\lim_{\eta\downarrow 0}\log\left(1+\eta\frac{\Delta \Lambda'_j}{\Delta \Lambda^{(l)}_j}\right)^{1/\eta}\\
    &=\log \exp(\frac{\Delta \Lambda'_j}{\Delta \Lambda^{(l)}_j})\\
    &=\frac{\Delta \Lambda'_j}{\Delta \Lambda^{(l)}_j}
\end{align}
We next need to show that we can pull the limit as $\eta\downarrow 0$ under integrals.  There are two relevant terms: $\frac{P(\sum_{j=1}^K \eta \Lambda'_K)}{\eta}$, which we can trivially handle by pulling $\eta$ outside the integral, and
\begin{align}
    \frac{P(\sum_{j=1}^K \Delta N_j^{\tau_j}\Delta \Lambda_j^{s_j}\log \frac{\Delta \Lambda^{(l)}_j+\eta\Delta \Lambda'_j}{\Delta \Lambda^{(l)}_j})}{\eta}&=P(\sum_{j=1}^K \Delta N_j^{\tau_j}\Delta \Lambda_j^{s_j}\frac{1}{\eta} \log \frac{\Delta \Lambda^{(l)}_j+\eta\Delta \Lambda'_j}{\Delta \Lambda^{(l)}_j})
\end{align}
Noting that $\frac{1}{\eta} \log \frac{\Delta \Lambda^{(l)}_j+\eta\Delta \Lambda'_j}{\Delta \Lambda^{(l)}_j}$ is monotone increasing to $\frac{\Delta \Lambda'_j}{\Delta \Lambda^{(l)}_j}$ as $\eta \downarrow 0$, we can apply the monotone convergence theorem to pull the limit under the integral.  Then
\begin{align}
    \lim_{\eta\downarrow 0}\frac{Q(\Lambda^{(l)}+\eta \Lambda'|\Lambda)-Q(\Lambda^{(l)}|\Lambda)}{\eta}&=P(\sum_{j=1}^K \Delta N_j^{\tau_j}\Delta \Lambda_j^{s_j}\frac{\Delta \Lambda'_j}{\Delta \Lambda^{(l)}_j}-\Lambda'_K)\\
    &=P(\sum_{j=1}^K (\frac{\Delta N_j^{\tau_j}\Delta\Lambda_j^{s_j}}{\Delta \Lambda^l_{j}}-1)(\Delta\Lambda'_{j}))\numberthis
\end{align}

\subsection{Proof of Lemma \ref{lemma:gradient-stability}}\label{proof:gradient-stability}
\begin{align*}
    \left \vert \sum_{j=1}^K s_j \left(\frac{\Delta \Lambda_{j}-\Delta \Lambda^*_{j}}{\Delta\Lambda^*_{j}}\right)\left(\Delta \Lambda'_{j}-\Delta\Lambda^*_{j}\right)\right\vert&\leq k_0\left\vert\frac{(\Delta \Lambda_j-\Delta\Lambda_j^*)(\Delta \Lambda_j'-\Delta \Lambda_j^*))}{\Delta \Lambda^*_j}\right\vert\\
    &\leq \frac{k_0}{c}\vert (\Delta \Lambda_j-\Delta\Lambda_j^*)(\Delta \Lambda_j'-\Delta \Lambda_j^*))\vert\\
    &\leq k_0 \frac{b^2}{c}\\
    &<\infty
\end{align*}
then any integral of this over $\mathcal{T}\times \mathcal{K}\times \mathcal{Z}$ will also be finite, and we can apply Fubini's theorem to such integrals. Then recalling that $\epsilon>0$ is the MCAR probability (assumption \ref{assm:MCAR}),
\begin{align*}
    \langle \nabla Q(\Lambda^*|\Lambda)-\nabla Q(\Lambda^*|\Lambda^*),\Lambda'-\Lambda^*\rangle&=P\left[\sum_{j=1}^K s_j \left( \frac{\Delta\Lambda _{j}-\Delta\Lambda^*_{j}}{\Delta \Lambda^*_{j}}\right)\left(\Delta \Lambda'_{j}-\Delta \Lambda^*_{j}\right)\right]\\
    &\leq \epsilon \left[P\left(\sum_{j=1}^K  \left( \frac{\Delta\Lambda _{j}-\Delta\Lambda^*_{j}}{\Delta \Lambda^*_{j}}\right)^2\right)\right]^{1/2}\\
    &\qquad\times\left[P\left(\sum_{j=1}^K  \left(\Delta \Lambda'_{j}-\Delta \Lambda^*_{j}\right)^2\right)\right]^{1/2}
\end{align*}
Here we used Fubini's theorem to pull out $\epsilon$.  Applying the CS inequality for inner products in $l^2$, and finally applying the CS inequality for expectations gives us the result.  The first term on rhs is then:
\begin{align*}
&\left[P\left(\sum_{j=1}^K \left( \frac{\Delta\Lambda _{j}-\Delta\Lambda^*_{j}}{\Delta \Lambda^*_{j}}\right)^2\right)\right]^{1/2}\leq \frac{1}{c} \left[P\left(\sum_{j=1}^K  \left(\Delta\Lambda _{j}-\Delta\Lambda^*_{j} \right)^2\right)\right]^{1/2}
\end{align*}
This gives us
\begin{align*}
    &P\left[\sum_{j=1}^K s_j \left( \frac{\Delta\Lambda _{j}-\Delta\Lambda^*_{j}}{\Delta \Lambda^*_{j}}\right)\left(\Delta \Lambda'_{j}-\Delta \Lambda^*_{j}\right)\right]\leq \frac{\epsilon}{c}\Vert \Lambda-\Lambda^*\Vert \Vert\Lambda'-\Lambda^*\Vert
\end{align*}
and thus
\begin{align*}
    \langle \nabla Q(\Lambda^*|\Lambda)-\nabla Q(\Lambda^*|\Lambda^*),\Lambda'-\Lambda^*\rangle &\leq  \frac{\epsilon}{c} \Vert \Lambda-\Lambda^*\Vert \Vert \Lambda'-\Lambda^*\Vert 
\end{align*}


\subsection{Proof of Lemma \ref{lemma:almost-strong-concavity}}\label{proof:almost-strong-concavity}
Note that
\begin{align}
    Q(\Lambda'|\Lambda)&=P\left(\sum_{j=1}^K \Delta N_j^{\tau_j}\Delta \Lambda^{s_j}_j\log[\Delta \Lambda'_j]-\Lambda'(T_K)\right)\\
    &=P\left(\sum_{j=1}^K\tau_j[\Delta N_j\log[\Delta \Lambda'_j]-\Lambda'(T_K)]\right)+P\left(\sum_{j=1}^K s_j[\Delta \Lambda_j\log[\Delta \Lambda'_j]-\Lambda'(T_K)]\right)
\end{align}

Define
\begin{align}
    Q_1(\Lambda'|\Lambda)
    &=P\left(\sum_{j=1}^K\Delta N_j\log[\Delta \Lambda'_j]-\Lambda'(T_K)\right)\\
    Q_2(\Lambda'|\Lambda)&= P\left(\sum_{j=1}^K\Delta \Lambda_j\log[\Delta \Lambda'_j]-\Lambda'(T_K)\right)
\end{align}
and note that if $\sum_{j=1}^K\tau_j[\Delta N_j\log[\Delta \Lambda'_j]-\Lambda'(T_j)]$ and $\sum_{j=1}^K s_j[\Delta N_j\log[\Delta \Lambda'_j]-\Lambda'(T_j)]$ are integrable over $\mathcal{N}\times \mathcal{T}\times \mathcal{K}\times\mathcal{Z}$ with respect to the measure $P$, we can apply Fubini's theorem to obtain
\begin{align}
    Q(\Lambda'|\Lambda)&=(1-\epsilon)Q_1(\Lambda'|\Lambda)+\epsilon Q_2(\Lambda'|\Lambda)
\end{align}

This proof takes place in four parts. We first show an inequality that allows us to characterize $r$ for $B_r(\Lambda^*)$. Next we show that in this ball the integrability conditions above hold. We then show that $Q_1$ and $Q_2$ are both strictly concave. Finally we show that $Q_1$ is strongly concave and thus $Q$, a positive linear combination of a strongly concave and a strictly concave function, is strongly concave.
\subsubsection{Characterizing the Radius of Contraction}
\begin{claim}\label{claim:term-I-inequality}
Let $h(x)=x(\log(x)-1)+1$.  For $\Vert \Lambda^*-\Lambda'\Vert_\infty\leq\frac{c}{4}$
\begin{align}
    h(\frac{\Delta \Lambda^*_j}{\Delta \Lambda'_j})\geq \frac{1}{3}(\frac{\Delta \Lambda^*_j}{\Delta \Lambda'_j}-1)^2
\end{align}
\end{claim}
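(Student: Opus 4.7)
The plan is to reduce the claim to a one-variable calculus inequality by first controlling the range of the ratio $x := \Delta\Lambda^*_j/\Delta\Lambda'_j$, and then analyzing the ``gap'' function $g(x) := h(x) - \frac{1}{3}(x-1)^2$ on that range.

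First I would carry out the range reduction. Since $\Vert\Lambda^*-\Lambda'\Vert_\infty \leq c/4$, the triangle inequality on increments gives $|\Delta\Lambda^*_j-\Delta\Lambda'_j| \leq 2\cdot c/4 = c/2$. Combined with the uniform lower bound $\Delta\Lambda^*_j \geq c$ (from the definition of $c$), this forces $\Delta\Lambda'_j \in [\Delta\Lambda^*_j - c/2,\,\Delta\Lambda^*_j + c/2] \subseteq [c/2,\infty)$. Checking monotonicity of $t\mapsto t/(t\pm c/2)$ in $t \geq c$, both the maximum and minimum of $x$ over all admissible $(\Delta\Lambda^*_j,\Delta\Lambda'_j)$ are attained at the worst case $\Delta\Lambda^*_j = c$, yielding $x \in [2/3,\,2]$ uniformly over $j$.

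Next I would analyze $g$ on $[2/3, 2]$. Direct computation gives $h(1) = 0$, $h'(x) = \log x$ (so $h'(1) = 0$), and $h''(x) = 1/x$. Hence $g(1) = 0$, $g'(1) = 0$, and $g''(x) = 1/x - 2/3$, so $g$ is convex on $(0, 3/2]$ and concave on $[3/2,\infty)$. On the convex region $[2/3, 3/2]$, which contains the stationary point $x=1$, tangent-line minorization at $x=1$ yields $g(x) \geq g(1) + g'(1)(x-1) = 0$. On the concave region $[3/2, 2]$, a concave function on a closed interval is minimized at an endpoint, so it suffices to verify $g(3/2) > 0$ and $g(2) = 2\log 2 - \frac{4}{3} > 0$, both of which reduce to the elementary estimate $\log 2 > 2/3$.

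The main obstacle here is bookkeeping rather than conceptual depth. The constant $c/4$ in the hypothesis of Lemma~\ref{lemma:almost-strong-concavity} is evidently calibrated precisely so that the reduction produces the tractable range $[2/3, 2]$: a larger radius could push $x$ outside the region where $g \geq 0$, while a smaller one would shrink the range toward $\{1\}$ and make the bound trivial. Similarly, the constant $1/3$ (strictly less than $h''(1)/2 = 1/2$) is the largest clean fraction for which $g$ remains nonnegative on the concave tail near $x = 2$; any constant too close to $1/2$ would push $g(2)$ below zero. Once the range reduction is recognized, the remaining inequality is a routine convexity/concavity check.
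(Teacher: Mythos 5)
Your proposal is correct, and the range-reduction step is essentially the paper's: both arguments bound $\bigl|\Delta\Lambda^*_j-\Delta\Lambda'_j\bigr|$ by $2\Vert\Lambda^*-\Lambda'\Vert_\infty\leq c/2$ and use $\Delta\Lambda^*_j\geq c$ to confine the ratio $x=\Delta\Lambda^*_j/\Delta\Lambda'_j$ to a neighborhood of $1$ (the paper only records the upper bound $x\leq 2$, which is all that is needed; your two-sided interval $[2/3,2]$ is harmless extra information). Where you genuinely diverge is the scalar inequality $h(x)\geq\frac{1}{3}(x-1)^2$. The paper writes $\phi(y)=h(1+y)$, Taylor-expands with Lagrange remainder to get $\phi(y)=\frac{y^2}{2(1+\xi)}$ with $\xi$ between $0$ and $y$, and asserts $\frac{1}{2(1+\xi)}\geq\frac{1}{3}$ for all $y\in(-1,1)$. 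As written, that last step only holds when $\xi\leq\frac{1}{2}$; for $y$ close to $1$ (i.e.\ $x$ close to $2$) the remainder coefficient can drop toward $\frac{1}{4}$, so the paper's justification is loose on the upper end of the range even though the conclusion is true there. Your gap-function argument --- convexity of $g=h-\frac{1}{3}(\cdot-1)^2$ on $(0,3/2]$ via $g(1)=g'(1)=0$, plus concavity on $[3/2,2]$ with the endpoint check $g(2)=2\log 2-\frac{4}{3}>0$ --- closes exactly that case cleanly, so your route is not just different but slightly more rigorous. The trade-off is that the paper's remainder computation is shorter and makes the provenance of the constant $\frac{1}{3}$ (as a lower bound on $\frac{1}{2(1+\xi)}$) more transparent in the regime where it does apply, whereas yours requires the explicit numerical check $\log 2>2/3$ but is airtight over the whole interval.
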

\begin{proof}
Let $\phi(y)=(1+y)(\log(1+y)-1)+1$.  Then we can Taylor expand
\begin{align*}
    \phi(y)&=\phi(0)+\phi'(0)y+\frac{\phi''(\xi)}{2}y^2
\end{align*}
where $\xi$ lies between $0$ and $y$.  Then noting that $\phi(0)=\phi'(0)=0$ and $\phi''(y)=\frac{1}{1+y}$,
\begin{align*}
    \phi(y)&=\frac{\phi''(\xi)}{2}y^2\\
    &=\frac{1}{2(1+\xi)}y^2\\
    &\geq \frac{1}{3}y^2
\end{align*}
for $y\in (-1,1)$.  Letting $h(x)=x(\log x-1)+1$ this implies that $h(x)\geq \frac{1}{3}(x-1)^2$ for $(x-1)\in (-1,1)$ i.e. $x\in (0,2)$. Let $x=\frac{\Delta \Lambda^*_{j}}{\Delta\Lambda'_{j}}$.  Then we need $\Delta \Lambda_j'\geq \frac{1}{2}\Delta \Lambda_j^*$ for $h(x)\geq \frac{1}{3}(x-1)^2$ to hold. Note that if
\begin{align}\label{eqn:z-bound}
    |\Delta \Lambda^*_j-\Delta\Lambda'_j|\leq z\Delta \Lambda^*_j
\end{align}
Then
\begin{align*}
    \Delta \Lambda^*_j-\Delta\Lambda'_j&\leq z\Delta \Lambda^*_j\\
    (1-z)\Delta\Lambda^*_j&\leq \Delta \Lambda'_j\\
    \Delta \Lambda^*_j&\leq\frac{1}{1-z}\Delta \Lambda' _j
\end{align*}
and if $z=0.5$ the desired result holds. For equation \ref{eqn:z-bound} to hold it suffices to have $(\Delta \Lambda^*_j-\Delta\Lambda'_j)^2\leq z^2\Delta \Lambda^{*2}_j\leq z^2c^2$. Now noting that $(a-b)^2\leq 2(a^2+b^2)$,
\begin{align*}
    (\Delta\Lambda_j^*-\Delta\Lambda'_j)^2&=([\Lambda_j^*-\Lambda_j']-[\Lambda_{j-1}^*-\Lambda_{j'-1}])^2\\
    &\leq 2([\Lambda_j^*-\Lambda_j]^2+[\Lambda_{j-1}^*-\Lambda_{j-1}]^2)\\
    &\leq 4\Vert \Lambda^*-\Lambda'\Vert_\infty^2
\end{align*}
w.p. 1, and noting that we want $z=0.5$ a sufficient condition is
\begin{align*}
    4\Vert \Lambda^*-\Lambda'\Vert_\infty^2&\leq z^2c^2\\
    \Vert \Lambda^*-\Lambda'\Vert_\infty&\leq\frac{c}{4}
\end{align*}

\end{proof}
\subsection{Integrability Conditions}
First note by assumption \ref{assm:unif-bounded-mgf}, all moments of $N(\tau)$ are uniformly bounded and thus $\Vert N(\tau)\Vert_\infty<\infty$. Then for $\Lambda,\Lambda'\in B_r(\Lambda^*)$ and by assumption \ref{assm:bounded-number-observations},
\begin{align*}
    P\left\vert\sum_{j=1}^K\tau_j[\Delta N_j\log[\Delta \Lambda'_j]-\Lambda'(T_j)]\right\vert&\leq k_0\Vert N(\tau)\Vert_\infty \max(|\log \frac{c}{2}|,\log b)+k_0 b\\
    &<\infty\\
    P\left\vert\sum_{j=1}^K s_j[\Delta \Lambda_j\log[\Delta \Lambda'_j]-\Lambda'(T_j)]\right\vert&\leq k_0 b\max(|\log\frac{c}{2}|,\log b)+k_0b\\
    &<\infty
\end{align*}
\subsubsection{Strict Concavity of $Q_1$ and $Q_2$}\label{sec:strict-concavity}
For $Q_1$ we have
\begin{align}
    Q_1(\frac{\Lambda_1+\Lambda_2}{2}|\Lambda)&=P\left(\sum_{j=1}^K\Delta N_j\log[\frac{\Delta\Lambda_{1,j}+\Delta\Lambda_{2,j}}{2}]-\frac{\Lambda_{1,K}+\Lambda_{2,K}}{2}\right)\\
    &>P\left(\sum_{j=1}^K\Delta N_j\log[\sqrt{\Delta \Lambda_{1,j}}\sqrt{\Delta\Lambda_{2,j}}]-\frac{\Lambda_{1,K}+\Lambda_{2,K}}{2}\right)\textrm{ AM-GM inequality}\\
    &=\frac{1}{2}Q_1(\Lambda_1|\Lambda)+\frac{1}{2}Q_1(\Lambda_2|\Lambda)
\end{align}

and for $Q_2$ the same argument can be made.
\subsubsection{Strong Concavity of $Q$}
Now note that
\begin{align*}
    Q_1(\Lambda^*|\Lambda)&-Q_1(\Lambda'|\Lambda)+\langle\nabla Q_1(\Lambda^*|\Lambda),\Lambda'-\Lambda^*\rangle\\
    &=P\left(\sum_{j=1}^K [\Delta N_j\log\frac{\Delta\Lambda_j^*}{\Delta\Lambda_j'}-(\Delta\Lambda^*_j-\Delta\Lambda'_j)(1-(\frac{\Delta N_j}{\Delta \Lambda^*_j}-1))]\right)\\
    &=P\left(\sum_{j=1}^K [\Delta \Lambda_j^*\log[\frac{\Delta \Lambda^*_j}{\Delta \Lambda'_j}]-(\Delta\Lambda^*_j-\Delta\Lambda'_j)] \right)\\
    &=P\left(\sum_{j=1}^K \Delta\Lambda'_{j}\left(\frac{\Delta\Lambda^*_{j}}{\Delta\Lambda'_{j}}\log\frac{\Delta\Lambda^*_{j}}{\Delta\Lambda'_{j}}-(\frac{\Delta\Lambda^*_{j}}{\Delta\Lambda'_{j}}-1)\right)\right)\\
    &=P\left(\sum_{j=1}^K \Delta\Lambda'_j h(\frac{\Delta \Lambda^*}{\Delta \Lambda'_j})\right)\\
    &\geq \frac{1}{3} P\left(\sum_{j=1}^K\Delta \Lambda'_{j}(\frac{\Delta \Lambda^*_{j}}{\Delta \Lambda'_{j}}-1)^2\right)\\
    &\geq \frac{1}{3b}\Vert \Lambda'-\Lambda^*\Vert^2\\
    &\geq \frac{1}{3b}\Vert \Lambda'-\Lambda^*\Vert^2
\end{align*}
where the first inequality holds by claim \ref{claim:term-I-inequality}. Now by strict concavity of $Q_2$ as shown in \ref{sec:strict-concavity} we have
\begin{align}
    Q_2(\Lambda^*|\Lambda)-Q_2(\Lambda'|\Lambda)+\langle \nabla Q_2(\Lambda^*|\Lambda),\Lambda'-\Lambda^*\rangle\geq 0
\end{align}
Summing $(1-\epsilon)Q_1$ and $\epsilon Q_2$ we obtain
\begin{align}
    Q(\Lambda^*|\Lambda)-Q(\Lambda'|\Lambda)+\langle\nabla Q(\Lambda^*|\Lambda),\Lambda'-\Lambda^*\rangle&\geq \frac{(1-\epsilon)}{3b}\Vert \Lambda'-\Lambda^*\Vert^2
\end{align}

\subsection{Proof of Population Contractivity}\label{proof:population-contractivity}
We now state the main proof of population contractivity.  Denote
\begin{align*}
    V(\Lambda'|\Lambda)&=Q(\Lambda'|\Lambda)-Q(\Lambda^*|\Lambda)-\langle \nabla Q(\Lambda^*|\Lambda),\Lambda'-\Lambda^*\rangle 
\end{align*}
then
\begin{align*}
    0&\leq Q(\Lambda'|\Lambda)-Q(\Lambda^*|\Lambda)\\
    &=V(\Lambda'|\Lambda)+\langle \nabla Q(\Lambda^*|\Lambda),\Lambda'-\Lambda^*\rangle\\
    &= V(\Lambda'|\Lambda)+\langle \nabla Q(\Lambda^*|\Lambda)-\nabla Q(\Lambda^*|\Lambda^*),\Lambda'-\Lambda^*\rangle+\langle \nabla Q(\Lambda^*|\Lambda^*),\Lambda'-\Lambda^*\rangle\\
    &\leq V(\Lambda'|\Lambda)+\langle \nabla Q(\Lambda^*|\Lambda)-\nabla Q(\Lambda^*|\Lambda^*),\Lambda'-\Lambda^*\rangle\textrm{ KKT conditions}\\
    &\leq -\nu \Vert \Lambda'-\Lambda^*\Vert^2+\gamma\Vert \Lambda-\Lambda^*\Vert \Vert \Lambda'-\Lambda^*\Vert\textrm{ technical Lemmas}
\end{align*}
and rearranging terms and dividing both sides by $\Vert \Lambda'-\Lambda^*\Vert$ gives the desired result.  Note that we used $\langle \nabla Q(\Lambda^*|\Lambda^*),\Lambda'-\Lambda^*\rangle\leq 0$, which if $\langle \cdot,\cdot\rangle$ were a valid inner product would be the KKT conditions.  However since $\langle\cdot,\cdot\rangle$  may not be a valid inner product, they must be checked specifically.  \cite{wellner2000two} does it in the sample case for the true log-likelihood: it is easy to verify that it still holds in the population case for $Q$-functions.  Noting that $\Lambda^*$ maximizes $Q(\Lambda'|\Lambda^*)$, we have that $Q(\Lambda^{*}+\eta( \Lambda'-\Lambda^*)|\Lambda^*)-Q(\Lambda^*|\Lambda^*)\leq 0$
\begin{align*}
    \langle \nabla Q(\Lambda^*|\Lambda^*),\Lambda'-\Lambda^*\rangle&=\lim_{\eta\downarrow 0}\frac{Q(\Lambda^{*}+\eta( \Lambda'-\Lambda^*)|\Lambda^*)-Q(\Lambda^*|\Lambda*)}{\eta}\\
    &\leq 0
\end{align*}

\subsection{Proof of Theorem \ref{thm:population-convergence}}\label{proof:population-convergence}

By induction.  It holds for $t=0$.  Assume it holds for $t\geq 0$.  Then $\Lambda^{(t+1)} \in B_r(\Lambda^*)$ and by assumption $Q(\Lambda^{(t+1)}|\Lambda^{(t)})\geq Q(\Lambda^*|\Lambda^{(t)})$.  Applying population contractivity and the induction assumption,
\begin{align*}
    \Vert \Lambda^{t+1}-\Lambda^*\Vert &\leq \frac{\gamma}{\nu}\Vert \Lambda^t-\Lambda^*\Vert\\
    &\leq \left(\frac{\gamma}{\nu}\right)^{t+1} \Vert \Lambda^0-\Lambda^*\Vert 
\end{align*}

\section{Proofs Related to Theorem \ref{thm:sample-convergence}}

\subsection{Proof of Proposition \ref{prop:rate-of-convergence}}\label{proof:rate-of-convergence}
\subsubsection{Definitions and Background from the Literature}

Before proving the proposition, we restate several important results that we use from existing literature. We repeat two definitions and two theorems from \cite{sen2018gentle}, adjusted to our notation. Note that $\eta$-brackets are normally called $\epsilon$-brackets. However, since we have already used $\epsilon$ to denote MCAR probabilities, we call them $\eta$-brackets.
\begin{definition}($\eta$-bracket) Let $(\mathcal{F},d)$ be a normed space of functions distance metric $d$ induced by some norm. Given two functions $l(\cdot)$ and $g(\cdot)$, the bracket $[l,g]$ is the set of all functions $f \in \mathcal{F}$ with $l(u)\leq f(u)\leq g(u)\forall u\in [0,\tau]$. An $\eta$-bracket is a bracket $[l,g]$ with $d(l,g)<\eta$.
\end{definition}
\begin{definition}
(Bracketing numbers). The bracketing number $N_{[]}(\eta,\mathcal{F},L_2(P))$ is the minimum number of $\eta$-brackets needed to cover $\mathcal{F}$ using $L^2(P)$ distance.
\end{definition}
\begin{definition}
(Bracketing Integral) The bracketing integral is defined as
\begin{align}
    J_{[]}(\delta,\mathcal{F},L_2(P))&\equiv \int_0^\delta \sqrt{\log(N_{[]}(\eta,\mathcal{F}\cup\{0\},L_2(P)))}d\eta
\end{align}
\end{definition}
Note that since any non-empty set requires at least one bracket to cover it and $\log(x+1)\leq 1+\log(x)$ for $x\geq 1$, 
\begin{align*}
    \int_0^\delta \sqrt{\log(N_{[]}(\eta,\mathcal{F}\cup\{0\},L_2(P)))}\leq \int_0^\delta \sqrt{1+\log(N_{[]}(\eta,\mathcal{F},L_2(P)))}
\end{align*}
the right hand side is sometimes used as the definition of the bracketing integral, but we use the left hand side, following \cite{sen2018gentle}. We now restate a theorem from \cite{sen2018gentle}, with the notation heavily adapted to our setting for clarity. The theorem is otherwise the same.
\begin{theorem}
\label{thm:5.1}(Theorem 5.1 in \cite{sen2018gentle}) Let $(\Theta\cap B_r(\Lambda^*),\Vert \cdot\Vert)$ be a semi-metric space. Fix $n\geq 1$. Let $\{Q_n(\Lambda'|\Lambda):\Lambda'\in \Theta \cap B_r(\Lambda^*)\}$ be a stochastic process and $\{Q(\Lambda'|\Lambda):\Lambda'\in \Theta \cap B_r(\Lambda^*)\}$ be a deterministic process. Assume
\begin{align}\label{eqn:sep-condition}
    Q(\Lambda'|\Lambda)-Q(M(\Lambda)|\Lambda)&\leq -c_1 \Vert \Lambda'-M(\Lambda)\Vert^2 
\end{align}
for some $c_1>0$. We call this the separation condition. Further, let
\begin{align}
    U_n(\Lambda'|\Lambda)&=Q_n(\Lambda'|\Lambda)-Q(\Lambda'|\Lambda)
\end{align}
and assume that there exists some function $\phi_n(\cdot)$ satisfying the following three conditions
\begin{enumerate}
    \item The following expected supremum condition holds
    \begin{align}\label{eqn:expected-supremum}
    \mathbb{E}\left[\sup_{\Lambda':\Vert \Lambda'-M(\Lambda)\Vert \leq \delta }\sqrt{n}\vert U_n(\Lambda'|\Lambda)-U_n(M(\Lambda)|\Lambda)\vert\right] \lesssim \phi_n(\delta)
    \end{align}
    \item there exists $\alpha<2$ so that
\begin{align}\label{eqn:function-pull-out-constant}
    \phi_n(dx)&\leq d^\alpha\phi_n(x)\forall d>1,x>0
\end{align}
    \item for the rate of convergence $r_n$
    \begin{align}\label{eqn:rate-inequality}
    \phi_n(r_n)&\lesssim \sqrt{n}r_n^2
\end{align}
as $n$ varies. 
\end{enumerate}
Here $\lesssim$ means $\leq$ the right hand side times a constant. Then for every $L>0$, $\Vert M_n(\Lambda)-M(\Lambda)\Vert\leq 2^L r_n$ with probability at least $1-u_L$. Here $u_L=\tilde{c}\sum_{j>M}2^{j(\alpha-2)}$, where $\tilde{c}$ only depends on the constants in the separation condition and the expected supremum bound.
\end{theorem}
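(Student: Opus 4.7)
The plan is to use the classical peeling (or ``slicing'') device from empirical process theory. The key idea is to decompose the event $\{\|M_n(\Lambda) - M(\Lambda)\| \geq 2^L r_n\}$ into concentric shells around $M(\Lambda)$ indexed by $j \geq L$, use the separation condition~\eqref{eqn:sep-condition} together with the defining inequality of $M_n$ to convert ``$M_n(\Lambda)$ lies in shell $j$'' into a lower bound on the empirical-process fluctuation $U_n$, apply Markov's inequality together with the expected-supremum bound~\eqref{eqn:expected-supremum}, and finally sum a geometric series in $j$ that converges because $\alpha < 2$.

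Concretely, define the shell $S_j := \{\Lambda' \in \Theta \cap B_r(\Lambda^*) : 2^j r_n \leq \|\Lambda' - M(\Lambda)\| < 2^{j+1} r_n\}$. On the event $\{M_n(\Lambda) \in S_j\}$, the separation condition yields $Q(M(\Lambda)|\Lambda) - Q(M_n(\Lambda)|\Lambda) \geq c_1\, 2^{2j} r_n^2$, while $M_n$'s maximizing property gives $Q_n(M_n(\Lambda)|\Lambda) - Q_n(M(\Lambda)|\Lambda) \geq 0$. Subtracting produces
\[
U_n(M_n(\Lambda)|\Lambda) - U_n(M(\Lambda)|\Lambda) \geq c_1 \cdot 2^{2j} r_n^2.
\]
Since $M_n(\Lambda)$ lies within the ball of radius $2^{j+1} r_n$ around $M(\Lambda)$, this event is contained in the supremum event appearing on the left of~\eqref{eqn:expected-supremum} at scale $\delta = 2^{j+1} r_n$. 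Markov's inequality with~\eqref{eqn:expected-supremum}, then~\eqref{eqn:function-pull-out-constant} with $d = 2^{j+1}$, and finally~\eqref{eqn:rate-inequality} give
\[
\Pr(M_n(\Lambda) \in S_j) \lesssim \frac{\phi_n(2^{j+1} r_n)}{\sqrt{n}\cdot c_1 \cdot 2^{2j} r_n^2} \lesssim \frac{2^{(j+1)\alpha}\,\sqrt{n}\, r_n^2}{\sqrt{n} \cdot 2^{2j} r_n^2} \lesssim 2^{j(\alpha-2)}.
\]

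A union bound over $j \geq L$ then gives
\[
\Pr(\|M_n(\Lambda)-M(\Lambda)\| \geq 2^L r_n) \leq \sum_{j \geq L} \Pr(M_n(\Lambda) \in S_j) \leq \tilde{c} \sum_{j \geq L} 2^{j(\alpha-2)} =: u_L,
\]
which is a geometric series with ratio $2^{\alpha-2} < 1$, so $u_L \to 0$ as $L \to \infty$. The main obstacle is not any single algebraic step but rather correctly handling measurability of the supremum over an uncountable parameter class, which is the standard reason one interprets $\mathbb{E}[\sup \cdots]$ in the outer-probability sense in empirical-process theory; a secondary subtlety is ensuring that~\eqref{eqn:function-pull-out-constant} may be applied at each dyadic scale $2^{j+1} r_n$ with constants independent of $j$, which is immediate because $\alpha$ is fixed and the hidden constant in~\eqref{eqn:expected-supremum} is universal in $\delta$.
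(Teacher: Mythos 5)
Your peeling argument is correct: the shell decomposition, the use of the separation condition together with the maximizing property of $M_n$ to force a large fluctuation of $U_n$, Markov's inequality with the expected-supremum bound, the application of \eqref{eqn:function-pull-out-constant} at scale $d=2^{j+1}$, and the geometric series with ratio $2^{\alpha-2}<1$ are all exactly right. The only point of comparison to flag is that the paper does not prove this statement at all --- it is quoted (with notation adapted) as Theorem 5.1 of \cite{sen2018gentle} and noted to be essentially Theorem 3.2.5 of \cite{van1996weak}; your argument is the standard peeling proof underlying that cited result, so there is nothing substantive to contrast. Two minor remarks: your conclusion gives $u_L=\tilde c\sum_{j\geq L}2^{j(\alpha-2)}$, which is what the statement intends (the paper's index ``$j>M$'' is evidently a typo for a sum starting at $L$); and your closing caveats about measurability of the supremum and the uniformity of the hidden constants in $j$ are exactly the points the paper itself raises when it distinguishes expectations from outer expectations and insists that $\tilde c$ not vary across EM iterations.
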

Note that this is essentially a special case of Theorem 3.2.5 of \cite{van1996weak}, but it uses expectations instead of outer expectations. It makes the stronger version of their assumptions and draws a stronger conclusion, giving a finite sample bound. The key in our setting will be to ensure that  $\tilde{c}$ does not vary across iterations, which requires that the constants for the separation condition and the expected supremum bound do not vary across iterations. Importantly, we cannot always apply this theorem in the general functional EM setting: it requires that the sample $Q$-function at its maximizer over a Sieve is equal to the sample $Q$-function at its maximizer over the full function space. This holds for the Poisson process log-likelihood for panel count data for a range of cases: for instance with step functions or splines where jumps/knots occur at observation times. It does not necessarily hold for arbitrary models. In the latter case we may be able to use Theorem 6.1 of \cite{sen2018gentle}, but this would require checking it carefully for each potential model.

We also note
\begin{theorem}\label{thm:4.12}
(Theorem 4.12 of \cite{sen2018gentle}) For any class $\mathcal{F}$ of measurable functions $f:\mathcal{X}\rightarrow\mathbb{R}$ such that $Pf^2<\delta^2$ and $\Vert f\Vert_{\infty}\leq M$ for every $f$,
\begin{align}
    \mathbb{E}[\sup_{f\in \mathcal{F}}\vert\sqrt{n}(P_n-P)\vert]&\leq \tilde{K}J_{[]}(\delta,\mathcal{F},L_2(P)) \left(1+\frac{J_{[]}(\delta,\mathcal{F},L_2(P))}{\delta^2 \sqrt{n}}M\right)
\end{align}
where $\tilde{K}>0$ is some constant.
\end{theorem}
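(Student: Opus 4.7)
The plan is to use the classical bracketing-entropy chaining argument, together with Bernstein's inequality applied at each link of the chain. The two regimes of Bernstein's inequality (sub-Gaussian in the variance $\delta^2$, linear in the envelope $M/\sqrt{n}$) will produce, respectively, the leading $J_{[]}$ term and the correction term $\frac{J_{[]}^2}{\delta^2 \sqrt{n}} M$ in the stated bound.

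First, I would set up dyadic brackets. For $q = 0, 1, 2, \ldots$, let $\eta_q = 2^{-q} \delta$ and fix a minimal $L^2(P)$-bracketing $\{[l_{q,i}, u_{q,i}]\}_{i=1}^{N_q}$ of $\mathcal{F}$ with $N_q = N_{[]}(\eta_q, \mathcal{F}, L_2(P))$ and $\|u_{q,i} - l_{q,i}\|_{L_2(P)} \le \eta_q$. For each $f \in \mathcal{F}$, let $\pi_q f$ be the lower endpoint of a bracket containing $f$. Since every $|f| \le M$, the $\pi_q f$ and the widths $\Delta_q f$ can be taken to satisfy $|\pi_q f| \le M$ and $|\Delta_q f| \le 2M$. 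Choose a truncation level $q_0$ (to be determined) and telescope
\[
\sqrt{n}(P_n - P)f \;=\; \sqrt{n}(P_n - P)\pi_0 f \;+\; \sum_{q=1}^{q_0} \sqrt{n}(P_n - P)(\pi_q f - \pi_{q-1} f) \;+\; \sqrt{n}(P_n - P)(f - \pi_{q_0} f).
\]
The tail term is pointwise bounded by $\Delta_{q_0} f$, whose $L^1(P)$-norm is at most $\eta_{q_0}$; pushing $q_0 \to \infty$ lets this contribution vanish in expectation up to a negligible term.

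Next I would apply Bernstein to each link. At level $q \ge 1$, the function $g := \pi_q f - \pi_{q-1} f$ ranges over a fixed collection of at most $N_q N_{q-1} \le N_q^2$ candidates, each satisfying $|g| \le 2M$ and
\[
\|g\|_{L_2(P)} \le \|\pi_q f - f\|_{L_2(P)} + \|f - \pi_{q-1} f\|_{L_2(P)} \le \eta_q + \eta_{q-1} \le 3\eta_{q-1}.
\]
Bernstein's inequality combined with a union bound over these $N_q^2$ functions, followed by integration of the resulting sub-Gaussian/sub-exponential tail, yields a bound on the expected maximum at level $q$ of the form
\[
\mathbb{E}\Big[\max_g |\sqrt{n}(P_n - P) g|\Big] \;\lesssim\; \eta_{q-1} \sqrt{\log N_q} \;+\; \frac{M \log N_q}{\sqrt{n}}.
\]

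Finally I would sum over $q$. The first sum is a Riemann approximation to $\int_0^\delta \sqrt{\log N_{[]}(\eta, \mathcal{F}, L_2(P))}\, d\eta = J_{[]}(\delta, \mathcal{F}, L_2(P))$, delivering the leading term. The second sum gives $\frac{M}{\sqrt{n}}\sum_q \log N_q$, and this is where I expect the main obstacle to lie: one has to show $\sum_q \log N_q \lesssim J_{[]}^2/\delta^2$, not just $J_{[]}$. The standard trick is to write $\log N_q = \frac{\eta_q}{\eta_q} \cdot \log N_q$, apply Cauchy--Schwarz against $\sum_q \eta_q^2 \lesssim \delta^2$, and recognize the result as $\delta^{-2} (\sum_q \eta_q \sqrt{\log N_q})^2 \lesssim \delta^{-2} J_{[]}^2$. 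Putting the two contributions together and absorbing the negligible $\pi_0$ and tail terms into the constant yields
\[
\mathbb{E}\Big[\sup_{f \in \mathcal{F}} |\sqrt{n}(P_n - P) f|\Big] \;\leq\; \tilde{K}\, J_{[]}(\delta, \mathcal{F}, L_2(P)) \left(1 + \frac{J_{[]}(\delta, \mathcal{F}, L_2(P))}{\delta^2 \sqrt{n}} M\right),
\]
as stated. The delicate bookkeeping is the Cauchy--Schwarz step and confirming that the $\eta_0 = \delta$ endpoint (where the chain starts) and the unbounded $q_0 \to \infty$ endpoint are both absorbed into the constant $\tilde{K}$.
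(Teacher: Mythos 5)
First, a framing point: the paper does not prove this statement at all --- it is imported verbatim as Theorem 4.12 of \cite{sen2018gentle}, which is in turn Lemma 3.4.2 of \cite{van1996weak}, and the only observation the paper adds is that the constant $\tilde K$ is universal (citing \cite{ma2005robust}). So you are reconstructing a textbook result, and your chaining-plus-Bernstein plan is the right family of argument: the per-level bound $\eta_{q-1}\sqrt{\log N_q}+M\log N_q/\sqrt n$ via Bernstein and a union bound over $N_q^2$ links is fine, and the first sum does telescope to $J_{[]}(\delta,\mathcal F,L_2(P))$ using monotonicity of $\eta\mapsto N_{[]}(\eta,\mathcal F,L_2(P))$.

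The gap is exactly where you flagged it, and your proposed fix does not work. Cauchy--Schwarz gives $\bigl(\sum_q\eta_q\sqrt{\log N_q}\bigr)^2\le\bigl(\sum_q\eta_q^2\bigr)\bigl(\sum_q\log N_q\bigr)$, i.e.\ $\sum_q\log N_q\gtrsim J_{[]}^2/\delta^2$ --- a \emph{lower} bound, not the upper bound you need. In fact no such upper bound can hold: with $\log N_{[]}(\eta)\asymp\eta^{-1}$ (precisely the entropy used later in this paper for monotone functions) one has $J_{[]}^2/\delta^2\asymp\delta^{-1}$ while $\sum_{q\le q_0}\log N_q\asymp 2^{q_0}\delta^{-1}$, which diverges as $q_0\to\infty$; for the same reason you cannot send $q_0\to\infty$ and ``absorb the tail.'' The actual proof terminates the chain at a finite, carefully chosen $q_0$ and, more importantly, truncates the $q$-th links adaptively at level $\sqrt n\,a_q$ with $a_q=\eta_q/\sqrt{\log N_{q+1}}$, so that the sub-exponential part of the Bernstein maximal inequality contributes only about $a_q\log N_q\lesssim\eta_q\sqrt{\log N_q}$ per level and hence also sums to $J_{[]}$; the hypothesis $\Vert f\Vert_\infty\le M$ is then used to show the truncation remainders vanish for all levels with $\sqrt n\,a_q\ge M$, and the remaining levels are what produce the correction term $J_{[]}^2M/(\delta^2\sqrt n)$. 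This adaptive-truncation bookkeeping (Theorem 2.14.2 and Lemma 3.4.3 of \cite{van1996weak}) is the substantive content of the lemma and is the step missing from your argument.
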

Importantly, $\tilde{K}$ is a \textit{universal} constant and does not depend on $\mathcal{F}$. This was noted by \cite{ma2005robust}. A version of this theorem was originally Lemma 3.4.2 in \cite{van1996weak}.


\subsubsection{Outline}

We follow \cite{balakrishnan2011class}, which proves the rate of convergence for the maximum pseudo-likelihood of a Poisson process objective function for panel count data: extending their proof to the expected complete data log-likelihood case is straightforward. However, we face the issue that we want a high probability uniform bound on the distance between sample and population M-steps \textit{across} EM iterations, whereas they neeeded an asymptotic high probability rate of convergence for the pseudo MLE. This poses three challenges: 1) our objective function at each iteration is the expected log-likelihood rather than the log-likelihood. Thus we cannot prove that the separation condition holds using the same techniques. 2) the separation condition must hold always rather than only in a neighborhood of the optimum 3) we need to check that constants are the same across EM iterations. Our aim is to apply Theorem \ref{thm:5.1} and show that the constant $\tilde{c}$ in $u_L$ does not vary across EM iterations. Note that other than checking the separation condition, the majority of this proof simply repeats the proof strategy of \cite{balakrishnan2011class} but fills in details of results they call to make sure that constants don't vary across iterations of EM.

This proof takes place in four parts. We first show that the separation condition holds. We then bound the expectation of the supremum of the magnitude of an empirical process. We next prove the two properties of the function involved in that bound to show the rate of convergence, and finally conclude by applying Theorem $\ref{thm:5.1}$.
\subsubsection{Separation Condition}
We first prove that the separation condition given by Equation \ref{eqn:sep-condition} holds. This involves applying functional second order Taylor expansions to $Q_1$ and $Q_2$ and using the remainder terms to obtain quadratic lower bounds.
\begin{claim}
For any $\Lambda,\Lambda'\in B_r(\Lambda^*)$,
\begin{align*}
    Q(M(\Lambda)|\Lambda)-Q(\Lambda'|\Lambda)&\geq \left((1-\epsilon)\frac{c}{b^2}+\epsilon\frac{c}{2b^2}\right)\Vert M(\Lambda)-\Lambda'\Vert^2
\end{align*}
\end{claim}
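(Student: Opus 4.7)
The plan is to adapt the decomposition-and-remainder argument already developed in Lemma~\ref{lemma:almost-strong-concavity}, but recentered around $M(\Lambda)$ rather than $\Lambda^*$. By MCAR (Assumption~\ref{assm:MCAR}) and Fubini, I would write $Q(\Lambda'|\Lambda) = (1-\epsilon)Q_1(\Lambda'|\Lambda) + \epsilon Q_2(\Lambda'|\Lambda)$ with the same $Q_1, Q_2$ as in the Lemma~\ref{lemma:almost-strong-concavity} proof, then handle each piece separately and recombine with weights $1-\epsilon$ and $\epsilon$ at the end.

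For each $Q_i$ I would use the Gateaux-derivative identity
\begin{align*}
Q_i(M(\Lambda)|\Lambda) - Q_i(\Lambda'|\Lambda) = -\langle \nabla Q_i(M(\Lambda)|\Lambda),\, \Lambda'-M(\Lambda)\rangle + R_i(M(\Lambda),\Lambda'),
\end{align*}
which is the analogue of the expansion used in Lemma~\ref{lemma:almost-strong-concavity} but with $\Lambda^*$ replaced by $M(\Lambda)$. Using Assumption~\ref{assm:counting-obs-process-ind} to evaluate $\mathbb{E}[\Delta N_j \mid T, K] = \Delta\Lambda^*_j$ inside $Q_1$, a short algebraic simplification should yield
\begin{align*}
R_1(M(\Lambda),\Lambda') &= P\sum_{j=1}^K \Delta\Lambda^*_j\, g\!\left(\Delta\Lambda'_j/\Delta M(\Lambda)_j\right), \\
R_2(M(\Lambda),\Lambda') &= P\sum_{j=1}^K \Delta\Lambda_j\, g\!\left(\Delta\Lambda'_j/\Delta M(\Lambda)_j\right),
\end{align*}
where $g(u) = u - 1 - \log u \geq 0$ is strictly convex with $g(1)=0$. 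Integrability of all terms follows from the same bounds used in Lemma~\ref{lemma:almost-strong-concavity}.

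Next, I would invoke the first-order optimality of $M(\Lambda)$: since $M(\Lambda)$ maximizes $Q(\cdot|\Lambda)$ over the convex set $\Theta \cap B_r(\Lambda^*)$, the one-sided limit argument from the proof of Proposition~\ref{prop:population-contraction} gives $\langle \nabla Q(M(\Lambda)|\Lambda), \Lambda' - M(\Lambda)\rangle \leq 0$ for any feasible $\Lambda'$. This kills the linear terms after taking the $(1-\epsilon),\epsilon$ combination, leaving $Q(M(\Lambda)|\Lambda) - Q(\Lambda'|\Lambda) \geq (1-\epsilon)R_1 + \epsilon R_2$. To turn the remainders into a $\|M(\Lambda)-\Lambda'\|^2$ quantity I would apply the Taylor-remainder bound $g(u) \geq (u-1)^2 / (2\max(1,u)^2)$, which on substitution gives
\begin{align*}
g\!\left(\Delta\Lambda'_j/\Delta M(\Lambda)_j\right) \geq \frac{(\Delta\Lambda'_j - \Delta M(\Lambda)_j)^2}{2\max(\Delta M(\Lambda)_j, \Delta\Lambda'_j)^2} \geq \frac{(\Delta\Lambda'_j - \Delta M(\Lambda)_j)^2}{2b^2}.
\end{align*}
Combining with $\Delta\Lambda^*_j \geq c$ and (via the $r \leq c/4$ argument of Claim~\ref{claim:term-I-inequality}) $\Delta\Lambda_j \geq c/2$ then delivers the stated quadratic lower bound, modulo tracking the sharpest available constants in the $g$ bound on the relevant range $u \in [c/(2b), 2b/c]$.

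The main obstacles will be two-fold. First, the proof must rigorously justify the first-order optimality condition despite $\langle\cdot,\cdot\rangle$ not being a genuine inner product — the one-sided Gateaux derivative argument already used in Appendix~\ref{proof:population-contractivity} is the template, but it needs to be re-verified for $M(\Lambda)$ as a constrained maximizer. Second, squeezing out the exact constants $c/b^2$ and $c/(2b^2)$ in the final bound requires using the tightest Taylor lower bound on $g$ on the range of $u$ induced by $B_r(\Lambda^*)$; any slack in that inequality propagates directly into $c_1$, which must stay constant across EM iterations to apply Theorem~\ref{thm:5.1} uniformly.
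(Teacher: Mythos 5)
Your proposal is correct and follows essentially the same route as the paper: the same $(1-\epsilon)Q_1+\epsilon Q_2$ decomposition, a second-order expansion about $M(\Lambda)$ whose linear term is discarded via the one-sided Gateaux/KKT argument, and a quadratic lower bound on the remainder using $\Delta\Lambda^*_j\geq c$, $\Delta\Lambda_j\geq \tfrac{c}{2}$, and increments bounded by $b$. The only difference is cosmetic --- you write the remainder exactly through $g(u)=u-1-\log u$ where the paper uses a Lagrange-form remainder $\Delta\Lambda^*_j/\Delta\xi_{1,j}^2$ (omitting the Taylor factor $\tfrac{1}{2}$, which is why your constants come out as half of theirs); either constant suffices downstream, since only positivity and iteration-independence of the separation constant are used in Theorem~\ref{thm:5.1}.
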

\begin{proof}
Consider the functional second order expansion of $Q(\Lambda'|\Lambda)$ at $Q(M(\Lambda)|\Lambda)$, which we can do by \cite{3076726}.
\begin{align*}
    Q_1(M(\Lambda)|\Lambda)-Q_1(\Lambda'|\Lambda)&=-\langle \nabla Q_1(M(\Lambda)|\Lambda),\Lambda'-M(\Lambda)\rangle\\
    &\qquad-P(-\sum_{j=1}^K(\Delta M(\Lambda)_j-\Delta \Lambda'_j)\frac{\Delta N_j}{\Delta \xi_{1,j}^2}(\Delta M(\Lambda)_j-\Delta\Lambda'_j))\\
    &= -\langle\nabla  Q_1(M(\Lambda)|\Lambda),\Lambda'-M(\Lambda)\rangle\\
    &\qquad+P(\sum_{j=1}^K(\Delta M(\Lambda)_j-\Delta \Lambda'_j)\frac{\Delta \Lambda_j^*}{\Delta \xi_{1,j}^2}(\Delta M(\Lambda)_j-\Delta\Lambda'_j))\\
    &\geq-\langle \nabla Q_1(M(\Lambda)|\Lambda),\Lambda'-M(\Lambda)\rangle+ \frac{c}{b^2}\Vert M(\Lambda)-\Lambda'\Vert^2
\end{align*}
here $\xi_{1,j}=\Lambda'+\eta_1(M(\Lambda)-\Lambda')$ for $\eta_1\in [0,1]$. Further,
\begin{align*}
    Q_2(M(\Lambda)|\Lambda)-Q_2(\Lambda'|\Lambda)&=-\langle\nabla Q_2(M(\Lambda)|\Lambda),\Lambda'-M(\Lambda)\rangle\\
    &\qquad-P(-\sum_{j=1}^K(\Delta M(\Lambda)_j-\Delta \Lambda'_j)\frac{\Delta \Lambda_j}{\Delta \xi_{2,j}^2}(\Delta M(\Lambda)_j-\Delta\Lambda'_j))\\
    &= -\langle\nabla Q_2(M(\Lambda)|\Lambda),\Lambda'-M(\Lambda)\rangle\\
    &\qquad+P(\sum_{j=1}^K(\Delta M(\Lambda)_j-\Delta \Lambda'_j)\frac{\Delta \Lambda_j}{\Delta \xi_{2,j}^2}(\Delta M(\Lambda)-\Delta \Lambda'))\\
    &\geq -\langle \nabla Q_2(M(\Lambda)|\Lambda),\Lambda'-M(\Lambda)\rangle+\frac{c}{2b^2}\Vert M(\Lambda)-\Lambda'\Vert^2
\end{align*}
where the last line follows since $\Lambda\in B_r(\Lambda^*)$ so that $\Vert \Lambda-\Lambda^*\Vert_\infty\leq \frac{c}{4}$ and thus $\Delta \Lambda\geq \frac{1}{2}\Delta \Lambda^*\geq \frac{1}{2}c$ w.p. 1. Noting that $\langle \nabla Q(M(\Lambda)|\Lambda),\Lambda'-M(\Lambda)\rangle \leq 0$ by the KKT conditions,
\begin{align*}
    Q(M(\Lambda)|\Lambda)-Q(\Lambda'|\Lambda)&\geq \left((1-\epsilon)\frac{c}{b^2}+\epsilon\frac{c}{2b^2}\right)\Vert M(\Lambda)-\Lambda'\Vert^2
\end{align*}
\end{proof}

and thus the separation condition holds since we optimized over $\Theta_n \cap B_r(\Lambda^*)$.
\subsubsection{Bounding the Expectation of the Supremum of the Magnitude of the Empirical Process}
Our aim in this section is to apply Theorem \ref{thm:4.12} and use the result to show that the expected supremum condition, Equation \ref{eqn:expected-supremum} in Theorem \ref{thm:5.1} holds. Let
\begin{align*}
    \Theta_\delta\equiv \{\Lambda':\Vert\Lambda'-M(\Lambda)\Vert\leq \delta,\Lambda'\in \Theta\cap B_r(\Lambda^*)\}
\end{align*}
Let $m_{\Lambda',\Lambda}(Y)\equiv \sum_{j=1}^K \Delta N_{j}^{\tau_j}\Delta \Lambda^{s_j}(T_j)\log[ \Delta\Lambda'_j]-\Lambda'(T_{K})$ and
\begin{align}
    \mathcal{M}_\delta&\equiv \{m_{\Lambda',\Lambda}(Y)-m_{M(\Lambda),\Lambda}(Y):\Lambda'\in \Theta_\delta\}
\end{align}
This section proceeds as follows. We first show that for all $f\in \mathcal{M}_{\delta}$, $Pf^2\leq c_2\delta^2$ for some constant $c_2>0$ and $\Vert f\Vert_{\infty}\leq c_3$ for some $c_3>0$. We next show a bound on the bracketing entropy in terms of the bracket size. We then use this to bound the bracketing integral using $\delta^{1/2}$. We combine all of this to bound the expectation of the supremum of interest. We must carefully note that relevant constants do not vary across iterations.

\begin{claim}
For $\Lambda'\in \Theta_\delta$, $P|m_{\Lambda',\Lambda}(Y)-m_{M(\Lambda),\Lambda}(Y)|^2\leq c_2\delta^2$ for some $c_2>0$ that does not depend on $\Lambda'$ or $\Lambda$.
\end{claim}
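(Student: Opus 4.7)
The plan is to decompose $m_{\Lambda',\Lambda}(Y) - m_{M(\Lambda),\Lambda}(Y)$ into a log-ratio part and a linear endpoint part, bound each squared piece by a constant times $\sum_{j=1}^K (\Delta \Lambda'_j - \Delta M(\Lambda)_j)^2$, and then use the identity $P\bigl[\sum_{j=1}^K (\Delta \Lambda'_j - \Delta M(\Lambda)_j)^2\bigr] = \|\Lambda' - M(\Lambda)\|^2$, which is immediate from the definition of $\mu_2$. Since $\Lambda' \in \Theta_\delta$ gives $\|\Lambda' - M(\Lambda)\| \leq \delta$, the desired bound $P|m_{\Lambda',\Lambda}(Y) - m_{M(\Lambda),\Lambda}(Y)|^2 \leq c_2 \delta^2$ follows, with $c_2$ depending only on the global constants $k_0,c,b,\sup_{t\in[0,\tau]} \mathbb{E}[N(t)^2]$.

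Concretely, I would proceed in four steps. First, write
$$m_{\Lambda',\Lambda}(Y) - m_{M(\Lambda),\Lambda}(Y) = A - B,$$
with $A = \sum_{j=1}^K \Delta N_j^{\tau_j}(\Delta\Lambda_j)^{s_j}\bigl[\log\Delta\Lambda'_j - \log\Delta M(\Lambda)_j\bigr]$ and $B = \Lambda'(T_K) - M(\Lambda)(T_K)$, and apply $(A-B)^2 \leq 2A^2 + 2B^2$. Second, since $\Lambda',M(\Lambda) \in B_r(\Lambda^*)$ with $r \leq c/4$, the increment control derived in Claim \ref{claim:term-I-inequality} gives $\Delta\Lambda'_j,\Delta M(\Lambda)_j \geq c/2$ a.s., so the mean value theorem on $\log$ yields $|\log\Delta\Lambda'_j - \log\Delta M(\Lambda)_j| \leq (2/c)|\Delta\Lambda'_j - \Delta M(\Lambda)_j|$. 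Third, apply $(\sum_{j=1}^K x_j)^2 \leq k_0\sum_{j=1}^K x_j^2$ (assumption \ref{assm:bounded-number-observations}) together with the pointwise bound $(\Delta N_j^{\tau_j}(\Delta\Lambda_j)^{s_j})^2 \leq N(\tau)^2 + b^2$ to obtain
$$A^2 \leq \frac{4k_0}{c^2}\bigl(N(\tau)^2 + b^2\bigr)\sum_{j=1}^K (\Delta\Lambda'_j - \Delta M(\Lambda)_j)^2, \qquad B^2 \leq k_0 \sum_{j=1}^K (\Delta\Lambda'_j - \Delta M(\Lambda)_j)^2,$$
the bound on $B^2$ following from telescoping $B = \sum_{j=1}^K (\Delta\Lambda'_j - \Delta M(\Lambda)_j)$ and the same $k_0$-inequality. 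Fourth, take $P$-expectation: by independence of $N$ from $(T,K)$ (assumption \ref{assm:counting-obs-process-ind}), the factor $N(\tau)^2+b^2$ separates from the increment sum; assumption \ref{assm:unif-bounded-mgf} produces a finite uniform bound on $\mathbb{E}[N(\tau)^2]$; and the $\mu_2$ identity turns the remaining expected sum into $\|\Lambda'-M(\Lambda)\|^2 \leq \delta^2$. An explicit valid constant is $c_2 = 2k_0 + (8k_0/c^2)(\sup_{t}\mathbb{E}[N(t)^2] + b^2)$.

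The main obstacle is not any single estimate but the bookkeeping imposed by Theorem \ref{thm:5.1}: for Proposition \ref{prop:rate-of-convergence} to deliver a high-probability bound valid uniformly across EM iterations, the constant $c_2$ must depend only on the global model parameters and \emph{not} on $\Lambda$ (the current iterate) or $\Lambda'$ (the candidate). This is what forces the restriction to $B_r(\Lambda^*)\cap\Theta$ with the calibrated radius $r \leq c/4$, which simultaneously supplies the uniform upper bound $b$ and the uniform lower bound $c/2$ on all increments encountered along any EM trajectory, and it is the reason assumption \ref{assm:unif-bounded-mgf} is stated as a \emph{uniform} MGF bound rather than merely finiteness of $\mathbb{E}[N(\tau)^2]$. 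Each of the four steps above must then be audited to confirm that no implicit dependence on $\Lambda$ or the iteration index $t$ slips into $c_2$.
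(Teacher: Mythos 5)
Your proof is correct and follows essentially the same route as the paper's: Cauchy--Schwarz over the $k_0$ terms, a Lipschitz bound on the log via the lower bound $\Delta\Lambda'_j,\Delta M(\Lambda)_j\geq c/2$ coming from $r\leq c/4$, and the $\mu_2$ identity turning the expected increment sum into $\Vert\Lambda'-M(\Lambda)\Vert^2\leq\delta^2$. The only substantive difference is your handling of the count factor: the paper bounds $(\Delta N_j^{\tau_j}\Delta\Lambda_j^{s_j})^2$ by $\max(\Vert N(\tau)\Vert_\infty^2,b^2)$ (asserting $\Vert N(\tau)\Vert_\infty<\infty$ from the MGF assumption), whereas you keep $N(\tau)^2$ random and factor it out via Assumption \ref{assm:counting-obs-process-ind}, using only $\sup_t\mathbb{E}[N(t)^2]<\infty$ --- which is in fact the more defensible consequence of Assumption \ref{assm:unif-bounded-mgf}.
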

\begin{proof}
\begin{align*}
    P(m_{\Lambda',\Lambda}(Y)&-m_{M(\Lambda),\Lambda}(Y))^2\\
    &=P\left(\sum_{j=1}^K [\Delta N_j^{\tau_j}\Delta \Lambda_j^{s_j}\log \frac{\Delta \Lambda'_j}{\Delta M(\Lambda)_j}-(\Delta \Lambda'_j-\Delta M(\Lambda)_j)]\right)^2\\
    &\leq k_0P\left(\sum_{j=1}^K [\Delta N_j^{\tau_j}\Delta \Lambda_j^{s_j}\log \frac{\Delta \Lambda'_j}{\Delta M(\Lambda)_j}-(\Delta \Lambda'_j-\Delta M(\Lambda)_j)]^2\right)\\
    &\qquad\qquad\textrm{Cauchy Schwarz}\\
    &\leq 2k_0P\left(\sum_{j=1}^K [(\Delta N_j^{\tau_j}\Delta \Lambda_j^{s_j})^2(\log \frac{\Delta \Lambda'_j}{\Delta M(\Lambda)_j})^2+(\Delta \Lambda'_j-\Delta M(\Lambda)_j)^2]\right)\\
    &\qquad\qquad\textrm{since $(a-b)^2\leq 2(a^2+b^2)$}\\
    &=2k_0\left[P\left(\sum_{j=1}^K (\Delta N_j^{\tau_j}\Delta \Lambda_j^{s_j})^2(\log \frac{\Delta \Lambda'_j}{\Delta M(\Lambda)_j})^2\right)+\Vert \Lambda'-M(\Lambda)\Vert^2\right]\numberthis{}\label{eqn:bounding-squared}
\end{align*}
Then note
\begin{align*}
    P\left(\sum_{j=1}^K (\Delta N_j^{\tau_j}\Delta \Lambda_j^{s_j}\log \frac{\Delta \Lambda'_j}{\Delta M(\Lambda)_j})^2\right)&\leq P\left(\sum_{j=1}^K (\Delta N_j^{\tau_j}\Delta \Lambda_j^{s_j})^2\frac{(\Delta \Lambda'_j-\Delta M(\Lambda)_j)^2}{\min(\Delta \Lambda'_j,\Delta M(\Lambda)_j)^2}\right)\\
    &\leq \frac{1}{4c^2} P\left(\sum_{j=1}^K(\Delta N_j^{\tau_j}\Delta \Lambda_j^{s_j})^2(\Delta \Lambda'_j-\Delta M(\Lambda)_j)^2\right)\\
    &\leq \frac{\max(\Vert N(\tau)\Vert_\infty^2,b^2)}{4c^2}P\left(\sum_{j=1}^K (\Delta \Lambda'_j-\Delta M(\Lambda)_j)^2\right)\\
    &\qquad\qquad\textrm{ assumption \ref{assm:unif-bounded-mgf}}\\
    &=\frac{\max(\Vert N(\tau)\Vert_\infty^2,b^2)}{4c^2}\Vert \Lambda'-M(\Lambda)\Vert^2
\end{align*}
where the first line uses the inequality $1-\frac{1}{x} \leq \log(x) \leq x - 1$ which implies $\left(\log \left( \frac{x}{y} \right)\right)^2 \leq (x-y)^2 / \min(x,y)^2$.
Plugging this back into equation \ref{eqn:bounding-squared} we obtain
\begin{align*}
    P(m_{\Lambda',\Lambda}(Y)-m_{M(\Lambda),\Lambda}(Y))^2&\leq 2k_0\left[\frac{\max(\Vert N(\tau)\Vert_\infty^2,b^2)}{4c^2}\Vert \Lambda'-M(\Lambda)\Vert^2+\Vert \Lambda'-M(\Lambda)\Vert^2\right]\\
    &=\left(2k_0\frac{\max(\Vert N(\tau)\Vert_\infty^2,b^2)}{4c^2}+1\right)\Vert \Lambda'-M(\Lambda)\Vert^2\\
    &\leq \left(2k_0\frac{\max(\Vert N(\tau)\Vert_\infty^2,b^2)}{4c^2}+1\right)\delta^2
\end{align*}
so that we have $P|m_{\Lambda',\Lambda}(Y)-m_{M(\Lambda),\Lambda}(Y)|^2\leq c_2 \delta^2$ for some constant $c_2$, and $c_2$ does not depend on either $\Lambda'$ or $\Lambda$.
\end{proof}
\begin{claim}
For $\Lambda'\in \Theta_\delta$, $\Vert m_{\Lambda',\Lambda}(Y)-m_{M(\Lambda),\Lambda}(Y)\Vert_\infty\leq c_3$, where $c_3$ does not depend on $\Lambda$ or $\Lambda'$.
\end{claim}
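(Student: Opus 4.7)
The plan is to unpack $m_{\Lambda',\Lambda}(Y) - m_{M(\Lambda),\Lambda}(Y)$ into its two natural pieces and bound each summand by a constant depending only on the structural parameters $k_0, c, b, r$ and $\Vert N(\tau)\Vert_\infty$. First I would rewrite the difference as
\[
\sum_{j=1}^K \Delta N_j^{\tau_j}\Delta\Lambda_j^{s_j}\log\frac{\Delta\Lambda'_j}{\Delta M(\Lambda)_j} \;-\; \bigl(\Lambda'(T_K)-M(\Lambda)(T_K)\bigr),
\]
and then apply the triangle inequality together with $K\leq k_0$ (Assumption \ref{assm:bounded-number-observations}) to reduce the problem to bounding a single generic summand uniformly in $j$.

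The core step is controlling the log-ratio. I would argue as in the proof of Claim \ref{claim:term-I-inequality}: since $\Lambda',M(\Lambda)\in B_r(\Lambda^*)$ with $r\leq c/4$, the bound $|\Delta\Lambda'_j-\Delta\Lambda^*_j|\leq 2\Vert\Lambda'-\Lambda^*\Vert_\infty\leq c/2$ together with the lower bound $\Delta\Lambda^*_j\geq c$ pins both $\Delta\Lambda'_j$ and $\Delta M(\Lambda)_j$ into the deterministic window $[c/2,\,b]$. Hence $|\log(\Delta\Lambda'_j/\Delta M(\Lambda)_j)|\leq \log(2b/c)$ almost surely. The prefactor $\Delta N_j^{\tau_j}\Delta\Lambda_j^{s_j}$ equals either $\Delta N_j\leq \Vert N(\tau)\Vert_\infty$ (when $\tau_j=1$) or $\Delta\Lambda_j\leq b$ (when $s_j=1$, since $\Lambda\in B_r(\Lambda^*)$), invoking the paper's reading of Assumption \ref{assm:unif-bounded-mgf} as $\Vert N(\tau)\Vert_\infty<\infty$ (the same convention adopted in the integrability argument preceding Lemma \ref{lemma:almost-strong-concavity}).

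For the remaining linear term, both $\Lambda'$ and $M(\Lambda)$ sit in $B_r(\Lambda^*)$, so $|\Lambda'(T_K)-M(\Lambda)(T_K)|\leq 2r$ uniformly. Summing, I expect to land on
\[
\Vert m_{\Lambda',\Lambda}(Y)-m_{M(\Lambda),\Lambda}(Y)\Vert_\infty \;\leq\; k_0\,\max(\Vert N(\tau)\Vert_\infty,\,b)\,\log(2b/c) + 2r,
\]
which may be taken as $c_3$. Crucially, none of $k_0, b, c, r, \Vert N(\tau)\Vert_\infty$ depends on the particular $\Lambda$ or $\Lambda'$ chosen at any EM iteration, which is exactly the uniformity required downstream when applying Theorem \ref{thm:5.1}.

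The main obstacle is the log-ratio: without the standing restriction $r\leq c/4$, either $\Delta M(\Lambda)_j$ or $\Delta\Lambda'_j$ could be driven arbitrarily close to zero and the logarithm would blow up. Once that deterministic increment floor $c/2$ is secured the remainder of the argument is routine case-splitting on $(\tau_j,s_j)$ plus the ball bound $\Vert\Lambda'-M(\Lambda)\Vert_\infty\leq 2r$.
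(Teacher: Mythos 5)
Your proof is correct and follows essentially the same route as the paper: the same decomposition into the log-ratio term and the telescoped linear term, the same case split on $(\tau_j,s_j)$ giving the prefactor $\max(\Vert N(\tau)\Vert_\infty,b)$, and the same use of $r\le c/4$ to pin all increments into $[c/2,b]$. The only cosmetic difference is that you bound $\vert\log(\Delta\Lambda'_j/\Delta M(\Lambda)_j)\vert$ directly by $\log(2b/c)$ from that increment window, whereas the paper uses $\vert\log(x/y)\vert\le\vert x-y\vert/\min(x,y)$ and then bounds $\vert\Delta\Lambda'_j-\Delta M(\Lambda)_j\vert$ via the ball radius; both yield a constant $c_3$ independent of $\Lambda$ and $\Lambda'$.
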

\begin{proof}
Again using $\left\vert \log \left( \frac{x}{y} \right)\right\vert \leq \vert x-y\vert  / \vert \min(x,y)\vert$, we have
\begin{align*}
    \left\vert m_{\Lambda',\Lambda}(Y)-m_{M(\Lambda),\Lambda}(Y)\right\vert &=\left\vert \sum_{j=1}^K [\Delta N_j^{\tau_j}\Delta \Lambda_j^{s_j}\log \frac{\Delta \Lambda'_j}{\Delta M(\Lambda)_j}-(\Delta \Lambda'_j-\Delta M(\Lambda)_j)]\right\vert \\
    &\leq \left\vert \sum_{j=1}^K \max(\Vert N(\tau)\Vert_\infty,b)\frac{\Delta \Lambda'_j-\Delta M(\Lambda)_j}{\min(\Delta \Lambda'_j,\Delta M(\Lambda)_j)}\right\vert\\
    &\qquad+\left\vert \sum_{j=1}^K (\Delta \Lambda_j'-\Delta M(\Lambda)_j)\right\vert\\
    &\leq k_0\frac{\max(\Vert N(\tau)\Vert_\infty,b)}{2c}[2\Vert \Lambda'-M(\Lambda)\Vert_\infty ]+2k_0\Vert \Lambda'-\Lambda^*\Vert_\infty\\
    &\leq k_0\frac{\max(\Vert N(\tau)\Vert_\infty,b)}{2c}2[\Vert \Lambda'-\Lambda^*\Vert_\infty\\
    &\qquad+\Vert M(\Lambda)-\Lambda^*\Vert_\infty]+2k_0\Vert \Lambda'-\Lambda^*\Vert_\infty\\
    &\leq k_0 \frac{\max(\Vert N(\tau)\Vert_\infty,b)}{2}+k_0 \frac{c}{2}\\
    &=c_3
\end{align*}
where we used that $\Lambda',\Lambda^*,M(\Lambda)\in B_r(\Lambda^*)$ an $L^\infty$ ball with $r=\frac{c}{4}$
\end{proof}
By theorem 2.7.5 of \cite{van1996weak}, which bounds the bracketing number of monotone functions mapping to $[0,1]$, and noting that $\mathcal{M}_\delta$ has bracketing number less than or equal to that of $\Theta\cap B_r(\Lambda^*)$, which was shown in \cite{balakrishnan2009new},
\begin{align}
    \log N_{[]}(\eta,\mathcal{M}_\delta,L_2(P))\leq c_4\eta^{-1}
\end{align}
where $c_4$ only depends on $U_{\textrm{all}}$, the uniform upper bound in $\Theta$. Noting that $\mathcal{M}_{\delta}\cup \{0\}=\mathcal{M}_{\delta}$, we have
\begin{align}
    \int_0^\delta \sqrt{\log N_{[]}(\eta,\mathcal{M}_\delta\cup\{0\},L_2(P))}d\eta \leq c_5\delta^{1/2}
\end{align}
where again $c_5$ only depends on $U_\textrm{all}$. Let
\begin{align}
    \Vert \sqrt{n}(P_n-P)\Vert_{M_{\delta}}&=\sup_{f\in M_{\delta}}\vert \sqrt{n}(P_n-P)f\vert
\end{align}
and note that
\begin{align}
    \mathbb{E}\Vert \sqrt{n}(P_n-P)\Vert_{M_{\delta}}=\mathbb{E}\left[\sup_{\Lambda':\Vert \Lambda'-M(\Lambda)\Vert \leq \delta }\sqrt{n}\vert U_n(\Lambda'|\Lambda)-U_n(M(\Lambda)|\Lambda)\vert\right]
\end{align}
\begin{claim}
Because $P|m_{\Lambda',\Lambda}(Y)-m_{M(\Lambda),\Lambda}(Y)|^2\leq c_2\delta^2$ and $\Vert m_{\Lambda',\Lambda}(Y)-m_{M(\Lambda),\Lambda}(Y)\Vert_\infty\leq c_3$, we have
\begin{align}
\mathbb{E}\Vert \sqrt{n}(P_n-P)\Vert_{\mathcal{M}_\delta}\leq c_6 \phi_n(\delta)
\end{align}
for $\phi_n(\delta)=\delta^{1/2}+\delta^{-1}n^{-1/2}$, where $c_6$ does not depend on $\Lambda'$ or $\Lambda$.
\end{claim}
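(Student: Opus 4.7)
The plan is to invoke Theorem \ref{thm:4.12} directly on the function class $\mathcal{M}_\delta$, using the two envelope bounds from the preceding claims together with the bracketing entropy estimate already noted. Specifically, the previous work establishes that every $f \in \mathcal{M}_\delta$ satisfies $Pf^2 \leq c_2\delta^2$ and $\Vert f\Vert_\infty \leq c_3$. Applying Theorem \ref{thm:4.12} with envelope constant $M = c_3$ and $L^2(P)$-radius $c_2^{1/2}\delta$ yields
\begin{align*}
\mathbb{E}\Vert \sqrt{n}(P_n-P)\Vert_{\mathcal{M}_\delta} \leq \tilde{K}\,J_{[]}(c_2^{1/2}\delta, \mathcal{M}_\delta, L_2(P))\left(1 + \frac{J_{[]}(c_2^{1/2}\delta,\mathcal{M}_\delta,L_2(P))}{c_2\delta^2 \sqrt{n}}c_3\right).
\end{align*}

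Next I would bound the bracketing integral. From the bound $\log N_{[]}(\eta,\mathcal{M}_\delta,L_2(P)) \leq c_4 \eta^{-1}$ obtained by transferring the monotone-function bracketing bound of \cite{van1996weak} to $\mathcal{M}_\delta$ (as in \cite{balakrishnan2009new}), direct integration gives $J_{[]}(s, \mathcal{M}_\delta, L_2(P)) \leq 2\sqrt{c_4}\, s^{1/2}$ for every $s > 0$. Substituting $s = c_2^{1/2}\delta$ produces $J_{[]}(c_2^{1/2}\delta, \mathcal{M}_\delta, L_2(P)) \leq 2\sqrt{c_4}\, c_2^{1/4}\,\delta^{1/2}$. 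Plugging this back into the display above and collecting constants,
\begin{align*}
\mathbb{E}\Vert \sqrt{n}(P_n-P)\Vert_{\mathcal{M}_\delta} \leq 2\tilde{K}\sqrt{c_4}\,c_2^{1/4}\,\delta^{1/2} + \frac{4\tilde{K}\,c_4\,c_3}{c_2^{1/2}}\,\delta^{-1}\,n^{-1/2}.
\end{align*}
Setting $c_6 = \max\bigl(2\tilde{K}\sqrt{c_4}\,c_2^{1/4},\; 4\tilde{K}\,c_4\,c_3/c_2^{1/2}\bigr)$ gives exactly $\mathbb{E}\Vert \sqrt{n}(P_n-P)\Vert_{\mathcal{M}_\delta} \leq c_6\,\phi_n(\delta)$ with $\phi_n(\delta) = \delta^{1/2} + \delta^{-1}n^{-1/2}$.

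Finally, I would verify that $c_6$ does not depend on the iteration. The constant $\tilde{K}$ in Theorem \ref{thm:4.12} is universal (as emphasized in the statement of that theorem following \cite{ma2005robust}); $c_2$ and $c_3$ were shown in the preceding two claims to depend only on $k_0$, $b$, $c$, $U_{\text{all}}$, and $\Vert N(\tau)\Vert_\infty$, none of which involve $\Lambda$ or $\Lambda'$; and $c_4$ depends only on $U_{\text{all}}$ via the monotone bracketing bound on $\Theta \cap B_r(\Lambda^*)$. Hence $c_6$ is independent of $\Lambda$ and $\Lambda'$, which is exactly what is needed downstream to keep the constant $\tilde{c}$ in Theorem \ref{thm:5.1} from varying across EM iterations.

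The main technical obstacle is a bookkeeping one rather than a deep one: ensuring that the bracketing number bound really transfers from $\Theta \cap B_r(\Lambda^*)$ to $\mathcal{M}_\delta$ with the same constant $c_4$, and that passing from radius $\delta$ to the $L^2(P)$-radius $c_2^{1/2}\delta$ inside the bracketing integral does not introduce any $\Lambda$-dependent factors. Both follow because $m_{\Lambda',\Lambda}(Y) - m_{M(\Lambda),\Lambda}(Y)$ is a fixed Lipschitz functional of a monotone function (so brackets on $\Lambda'$ give brackets on the composition, inflating the $L^2(P)$ diameter by a uniformly bounded constant that can be absorbed into $c_4$), and the bracketing bound $c_4 \eta^{-1}$ holds uniformly in $\eta$.
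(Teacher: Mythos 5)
Your proposal is correct and follows essentially the same route as the paper: both invoke Theorem \ref{thm:4.12} with envelope $c_3$ and $L_2(P)$-radius $c_2^{1/2}\delta$, bound the bracketing integral by a constant times $\delta^{1/2}$ via the $c_4\eta^{-1}$ entropy bound, and collect constants into a $c_6$ that depends only on iteration-independent quantities. The only difference is cosmetic --- you integrate the entropy bound explicitly to get $2\sqrt{c_4}\,s^{1/2}$ where the paper abbreviates this as $c_5 s^{1/2}$.
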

\begin{proof}
Here we use Theorem \ref{thm:4.12} in order to prove Equation \ref{eqn:expected-supremum} in Theorem \ref{thm:5.1}. By Theorem \ref{thm:4.12} and again noting that $\mathcal{M}_{\delta}\cup \{0\}=\mathcal{M}_{\delta}$, we have
\begin{align*}
    \mathbb{E}&\Vert \sqrt{n}(P_n-P)\Vert_{\mathcal{M}_\delta}\\
    &\leq \tilde{K}\left(\frac{(\int_0^{\sqrt{c_2}\delta} \sqrt{\log N_{[]}(\eta,\mathcal{M}_\delta,L_2(P))}d\eta)^2c_3}{c_2\delta^2 \sqrt{n}}+\int_0^{\sqrt{c_2}\delta} \sqrt{\log N_{[]}(\eta,\mathcal{M}_\delta,L_2(P))}d\eta\right)\\
    &\leq \tilde{K}\left(\frac{c_5^2(c_2^{1/2})c_3\delta}{c_2\delta^2\sqrt{n}}+c_5c_2^{1/4}\delta^{1/2}\right)\\
    &=\tilde{K}\left(\frac{c_5^2c_3}{c_2^{1/2}\delta\sqrt{n}}+c_5c_2^{1/4}\delta^{1/2}\right)\\
    &\leq \tilde{K}\max\left(\frac{c_5^2 c_3}{c_2^{1/2}},c_5c_2^{1/4}\right)\left(\frac{1}{\delta\sqrt{n}}+\delta^{1/2}\right)
\end{align*}
Set $c_6=\tilde{K}\max\left(\frac{c_5^2 c_3}{c_2^{1/2}},c_5c_2^{1/4}\right)$ and we are done.
\end{proof}
With this we have the bound on the expected supremum and have proven Equation \ref{eqn:expected-supremum} for Theorem \ref{thm:5.1}.

\subsubsection{Characterizing the Function in the Bound}
We first prove Equation \ref{eqn:function-pull-out-constant} in Theorem \ref{thm:5.1}.
\begin{align}
    \phi_n(d\delta)&=(d\delta)^{1/2}+\frac{1}{(d\delta)\sqrt{n}}\\
    &=\sqrt{d}\left(\delta^{1/2}+\frac{1}{d^{3/2}\delta\sqrt{n}}\right)\\
    &\leq \sqrt{d}\left(\delta^{1/2}+\frac{1}{\delta\sqrt{n}}\right)\textrm{ for }d\geq 1\\
    &=\sqrt{d}\phi_n(\delta)
\end{align}
and thus $\alpha=\frac{1}{2}$.

Next we prove Equation \ref{eqn:rate-inequality}. Let $r_n=n^{-1/3}$. Then
\begin{align}
    \phi_n(r_n)&=2n^{-1/6}
\end{align}
and
\begin{align}
    \sqrt{n}r_n^2&=\sqrt{n}(n^{-1/3})^2\\
    &=n^{-1/6}
\end{align}
and thus
\begin{align}
    \phi_n(r_n)\leq 2\sqrt{n}r_n^2
\end{align}

\subsubsection{Putting it All Together}
We have now proven that all of the assumptions of Theorem \ref{thm:5.1} hold, and that constants do not vary across iterations. Then for some constant $\tilde{c}$ which \textit{does not vary across iterations}, for any $L>0$ and using $\alpha=\frac{1}{2}$, w.p. $1-\tilde{c}\sum_{j>M}2^{-3j/2}$,
\begin{align}
    \Vert M_n(\Lambda)-M(\Lambda)\Vert\leq 2^L n^{-1/3}
\end{align}

\subsection{Proof of Theorem \ref{thm:sample-convergence}}\label{proof:sample-convergence}

Note that if $\forall \Lambda\in B_r(\Lambda^*)$, $\Vert M_n(\Lambda)-M(\Lambda)\Vert \leq 2^L n^{-1/3}$, then we have $\sup_{\Lambda\in B_r(\Lambda^*)}\Vert M_n(\Lambda)-M(\Lambda)\Vert \leq 2^L n^{-1/3}$. We claim for any $t>0$,
\begin{align*}
    \Vert \Lambda^{(t+1)}-\Lambda^*\Vert &\leq \kappa\Vert \Lambda^{(t)}-\Lambda^*\Vert+2^L n^{-1/3}
\end{align*}
We prove by induction.  First, this holds for $t=1$.
\begin{align*}
    \Vert \Lambda^{(1)}-\Lambda^*\Vert &\leq \Vert M(\Lambda^{(0)})-\Lambda^*\Vert+\Vert M_{n}(\Lambda^{(0)})-M(\Lambda^{(0)})\Vert\\
    &\leq \kappa\Vert \Lambda^{(0)}-\Lambda^*\Vert +2^Ln^{-1/3}
\end{align*}
Now assume holds true for $t>0$.  Then for $t+1$,
\begin{align*}
    \Vert \Lambda^{(t+1)}-\Lambda^*\Vert&\leq \Vert M(\Lambda^{(t)})-\Lambda^*\Vert+\Vert M_{n}(\Lambda^{(t)})-M(\Lambda^{(t)})\Vert\\
    &\leq \kappa\Vert \Lambda^{(t)}-\Lambda^*\Vert+2^Ln^{-1/3}
\end{align*}
Now iterating we have
\begin{align*}
    \Vert \Lambda^{(t)}-\Lambda^*\Vert &\leq \kappa^t\Vert \Lambda^{(0)}-\Lambda^*\Vert+\frac{1}{1-\kappa}2^Ln^{-1/3}
\end{align*}

\section{Synthetic Analysis}\label{sec:synthetic-appendix}

\subsection{Square Root Synthetic Experiment}

We generate synthetic panel count data from mixed inhomogeneous Poisson processes with conditional mean functions $\Lambda^*(u)|X=Xu^{1/2}$, $\Lambda^*(u)|X=Xu^2$, where $X\sim \textrm{uniform}(0,2)$. The mean functions are then $\Lambda^*(u)=u^{1/2}$ and $\Lambda^*(u)=u^2$, respectively. The counting process conditional on $X$ is Poisson, but the marginal counting process is not. We use $100$ trajectories, each with $30$ observations and for each observation set it to missing with probability $0.2$.  We initialize the mean function $\Lambda^{(0)}$ by replacing the missing data with ${\sf Poisson}(1)$ random variables and fitting a model.
We generate $1000$ Monte Carlo runs and create Monte Carlo marginal confidence intervals from those runs.

Fig \ref{fig:square-root} compares the true mean function against AEE wrapped with our method vs AEE directly on the corrupted data. Taking the corrupted data as given learns highly biased results, while wrapping AEE with our algorithm learns close to the true mean function for both experiments.

\begin{figure}
    \centering
    \includegraphics[width=0.5\textwidth]{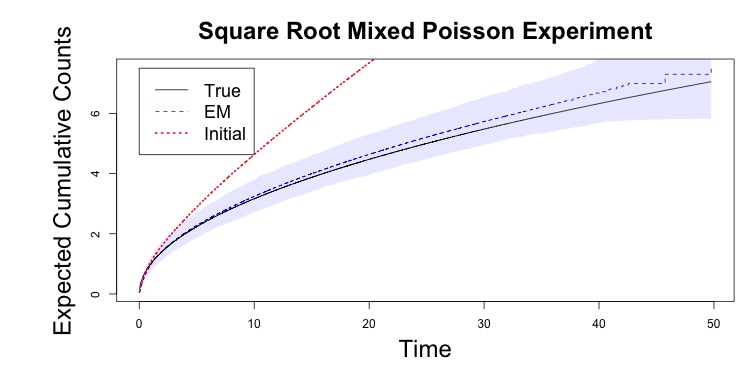}\includegraphics[width=0.5\textwidth]{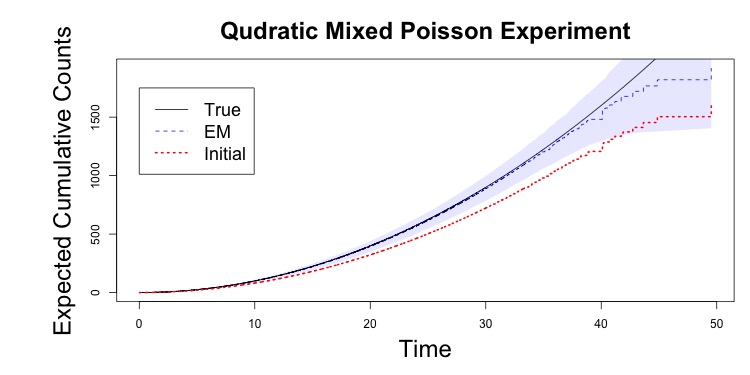}
    \caption{Mean+95\% Monte Carlo CIs of 1000 runs of Mixed Poisson processes, which violate the Poisson process assumption. Synthetic datasets: 20\% missingness using AEE wrapped with our method.  Black/solid is the true mean function.  Red/dotted is initialization with missing data set to Poisson$(1)$ values.  Blue/dashed is our EM algorithm initialized at red. (a) $\Lambda^*(u)=\sqrt{u}$ a square root mean function (b) $\Lambda^*(u)=u^2$ a quadratic mean function}
    \label{fig:square-root}
\end{figure}

\section{Further Analysis of Bladder Tumor Dataset}\label{sec:bladder-tumor-appendix}

\subsection{Comment on Mean and Confidence Intervals}

In order to form the mean and marginal confidence intervals, we can note that the observation times are all discrete valued at times (months) 1 to 50. Thus we can simply take mean and marginal confidence intervals at those points. To do so, for each bootstrap replicate, we add a small amount of noise $\pm 1e-6$ to the observed time points for identifiability purposes in order to train, and then take mean and marginal confidence intervals at the described points.

\subsection{Varying Missingness Probabilities}
In this experiment we vary the missingness probability with $\epsilon=0.1,0.2,0.3,0.4$.  We do so for each of the five following methods: the non-parametric step function maximum pseudo-likelihood (NPMPLE) of \cite{wellner2000two}, the smoothed maximum pseudo-likelihood (MPLs) and and maximum likelihood (MLs) estimators of \cite{lu2007estimation}, and the step function solution to the augmented estimating equation (AEE) and the informative censoring (AEEX) version methods of \cite{wang2013augmented}.  We show that bias is low but increasing as a function of the missingness probability for the five methods. Figure \ref{fig:bladder-varied-missingness} shows results.  The bias is very low in all cases. Further, in all cases it is much lower than the initialization with corrupted data shown in Figure \ref{fig:results}a.

\begin{figure}
    \centering
    \includegraphics[width=0.5\textwidth]{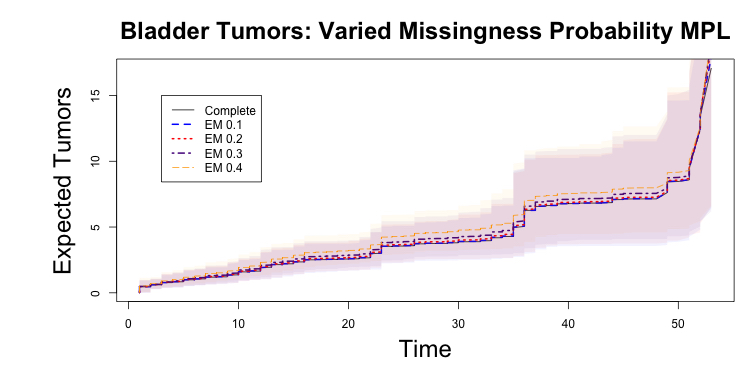}\includegraphics[width=0.5\textwidth]{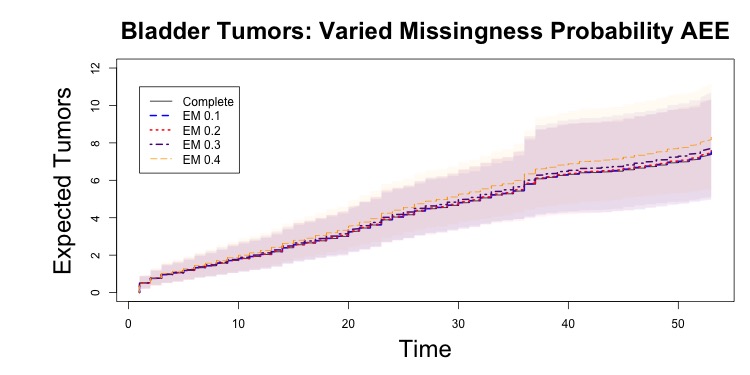}\\\includegraphics[width=0.5\textwidth]{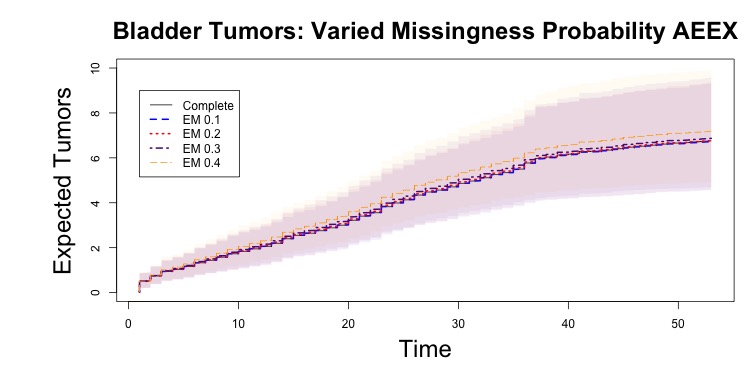}\includegraphics[width=0.5\textwidth]{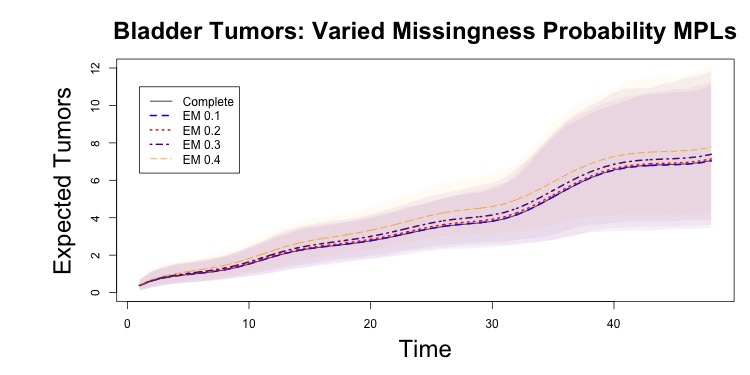}\\
    \includegraphics[width=0.5\textwidth]{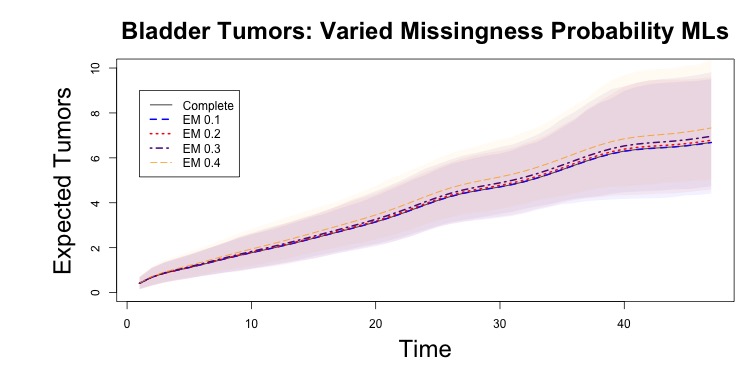}
    \caption{results for mean and 95\% CIs for 1000 bootstrap replicates for bladder tumor dataset, with the missingness probability $\epsilon$ set to $\epsilon=0.1,0.2,0.3,0.4$.  We see that bias is very low in all settings, although it increases with the missingness probability. The methods wrapped with our method are (a) maximum pseudo-likelihood (MPL) of \cite{wellner2000two} (b) augmented estimating equations (AEE) of \cite{wang2013augmented} (c) augmented estimating equations with informative censoring (AEEX) of \cite{wang2013augmented} (d) maximum pseudo-likelihood splines (MPLs) of \cite{lu2007estimation} note that for this we need to reduce the time axis slightly as spline models cannot interpolate far past the region where they have values in learning (e) maximum likelihood splines (MLs) of \cite{lu2007estimation}, same issue.}
    \label{fig:bladder-varied-missingness}
\end{figure}

\subsection{Varying Initialization}

In this experiment we vary the initialization. We note that the mean count across all observations is $0.44$. We investigate ${\sf Poisson}(1)$ to ${\sf Poisson}(4)$ initializations under both $\epsilon=0.2$ and $\epsilon=0.4$. We use AEE in all cases. Figure \ref{fig:varied-initialization} shows results. We see that further initializations increase bias, but only slightly for $\epsilon=0.2$ and more so for $\epsilon=0.4$. This is not surprising as our theory requires good initialization and sufficiently low missingness probability.

\begin{figure}
    \centering
    \includegraphics[width=0.49\textwidth]{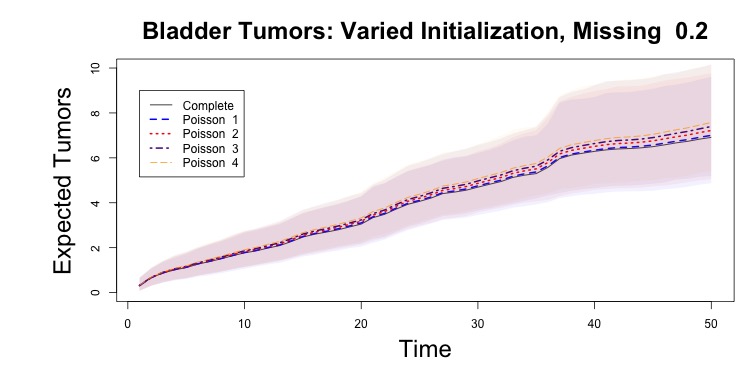}
    \includegraphics[width=0.49\textwidth]{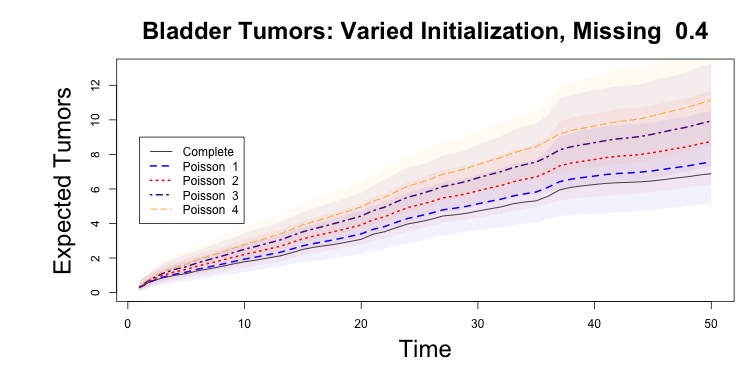}
    \caption{Here we vary the initialization by initializing increasingly far from the mean observed under complete data. The true sample mean of all intervals is $0.44$. We initialize to Poisson random variables with means $1$ to $4$. We see that initializing further from the truth increases the bias, and that it is worse for the higher missingness probability of $0.4$. Our theory requires good initialization and the missingness probability to be sufficiently low, so this is not surprising.}
    \label{fig:varied-initialization}
\end{figure}

\subsection{Heterogeneity in Missingness Probabilities}

Even if the MCAR assumption does hold, the missingness probability may vary between subjects. In this experiment, we let $\epsilon|X=\epsilon_{\textrm{mean}}X$ where $X\sim \textrm{uniform}(0,2)$ and $\epsilon_{\textrm{mean}}=0.2,0.4$. This can capture between subject heterogeneity in missingness. Within each subject, we compare initialization of ${\sf Poisson}(1)$ to ${\sf Poisson}(4)$. Figure \ref{fig:missingness-heterogeneity} shows results using the AEE method. They look very similar to results without heterogeneity in the missingness probabilities between subjects.

\begin{figure}
    \centering
    \includegraphics[width=0.49\textwidth]{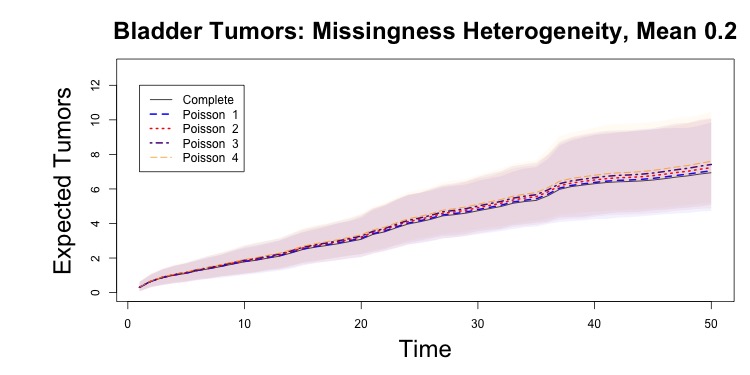}
    \includegraphics[width=0.49\textwidth]{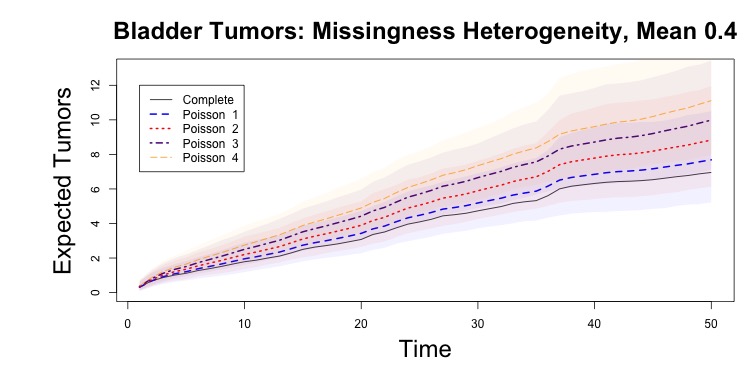}
    \caption{Here we let the missingness probability vary per participant while still being MCAR by multiplying the mean missingness by some value drawn from $\textrm{uniform}[0,2]$. We again initialize to ${\sf Poisson}(1)$ to ${\sf Poisson}(4)$ for a) mean missingness $0.2$ and b) mean missingness $0.4$.}
    \label{fig:missingness-heterogeneity}
\end{figure}

\subsection{Missing at Random}
Here we investigate the recovery under missing at random (MAR). For MAR, we note that 83.7\% of the counts are $0$: that is, there are no new tumors. We then set the missingness probability for a specific observation for a participant to $\epsilon_j=0.10$ if $\Delta N_{j-1}=0$ and otherwise $\epsilon_j=0.3$. Figure \ref{fig:MAR-MNAR} shows the results. Recovery is very good.

\begin{figure}
    \centering
    \includegraphics[width=0.49\textwidth]{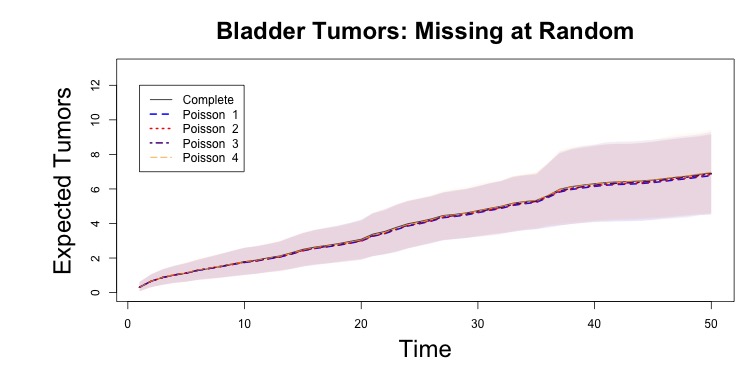}
    \caption{Missing at random (MAR). The time-varying missingness probability is $\epsilon_j=0.1$ if the previous $\Delta N_{j-1}=0$ (approximately 84\% of observations) and $\epsilon_j=0.3$ otherwise.}
    \label{fig:MAR-MNAR}
\end{figure}

\section{Further Exploratory Analysis of the Smoking Cessation Study and Dataset}\label{sec:smoking-cessation-appendix}

In this section we plot several histograms and boxplots to get insight into behavior variability in the study. We plot cigarettes since the last assessment and number of days between EMAs, where for both we aggregate both between and within subjects. We also plot the percentage of long intervals between subjects. That is, we take the percentage for each participant and then take the boxplots and histograms with the summary statistic for each participant as a data point.

Figure \ref{fig:cigs_since_last} shows the plots for number of cigarettes since the last assessment. We see that most of the time they don't smoke any cigarettes. The histogram looks like a geometric distribution for the number of cigarettes. The mean number of cigarettes is 1.76, the median is 0, and the 25 and 75 percentiles are 0 and $2$, respectively. Thus, in at least half of the intervals they don't smoke, and in 75\% percent of them they smoke at most two cigarettes, but in some cases they smoke some huge number: in one case someone smoked over $40$ cigarettes. That said, the number of cigarettes also will depend on the length of time elapsed since the last assessment, and the intensity function is time-varying: particularly, we noted in Figure \ref{fig:results}b) that they tend to smoke more frequently in the pre-quit period.

Figure \ref{fig:days_between_EMAs} plots the days between EMAs. Here we see that most observations are under one day, but there are a substantial number of outliers. In a number of cases the time between EMAs is over two days, and in several cases it is over six days. The mean time between observations is 0.34 days or approximately eight hours, the median is 0.17 or approximately four hours, and the 25 and 75 percentile are approximately two and 11 hours, respectively. Summarizing, most inter-EMA durations are under one day, but a few are very long.

Finally, also relevant is whether there is heterogeneity in the proportion of long intervals between subjects. Figure \ref{fig:long_intervals_between_subjects} investigates this. We see that participants vary between having no unreliable intervals and in one case having all unreliable intervals. The mean is 11\%, the median is 6\%, and the 25 and 75 percentile are 1.8\% and 11.4\%, respectively. Thus there is substantial heterogeneity. While our theory does not currently account for this and we leave it to future work, the experiments from the previous section where each subject had a different missingness probability (Figure \ref{fig:missingness-heterogeneity}) suggests that our model can handle this issue well.

\begin{figure}
    \centering
    \includegraphics[width=0.49\textwidth]{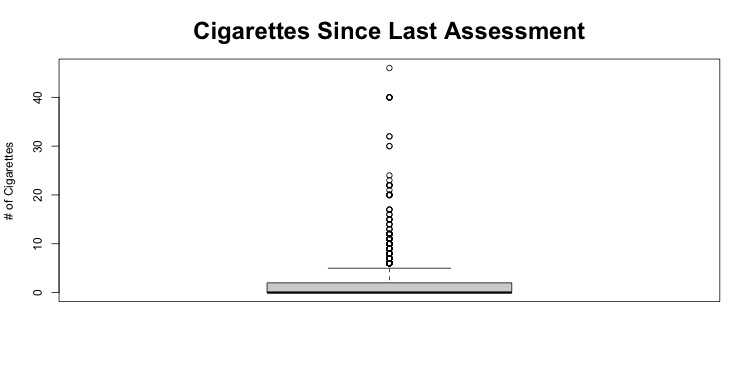}
    \includegraphics[width=0.49\textwidth]{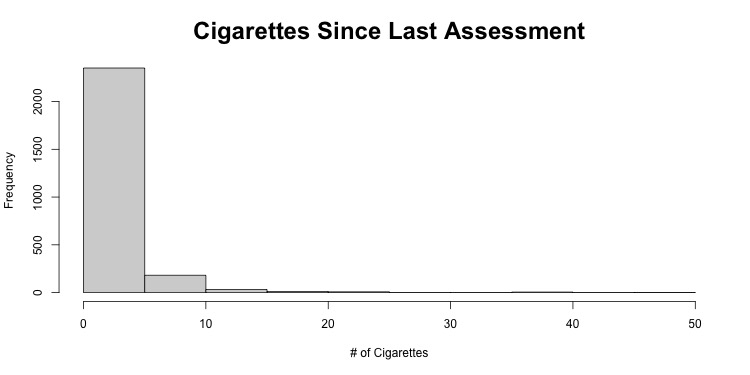}
    \caption{a) boxplot b) histogram for number of cigarettes since the last assessment. We see that the median number of cigarettes is 0, and that there are many outliers.}
    \label{fig:cigs_since_last}
\end{figure}

\begin{figure}
    \centering
    \includegraphics[width=0.49\textwidth]{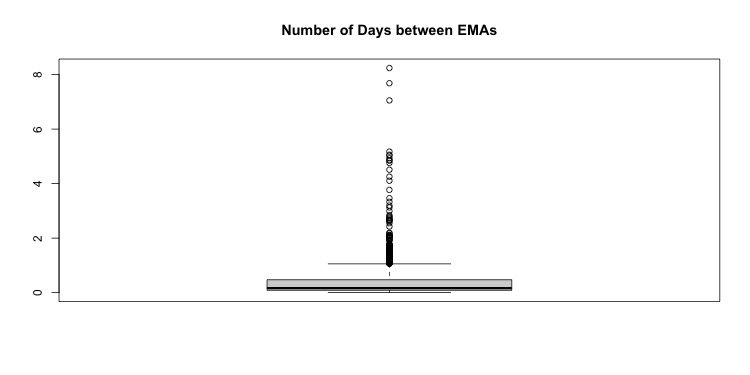}
    \includegraphics[width=0.49\textwidth]{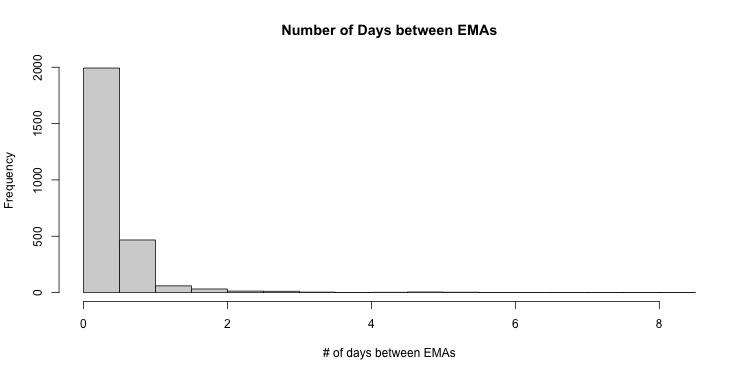}
    \caption{a) boxplot b) histogram for number of days between EMAs. Recall that we treat intervals over one day as missing/unreliable. From the boxplot we see that the mean and 75\% percentile are under one day, but that there are a substantial number of observations over one day. The histogram suggests a similar finding.}
    \label{fig:days_between_EMAs}
\end{figure}

\begin{figure}
    \centering
    \includegraphics[width=0.49\textwidth]{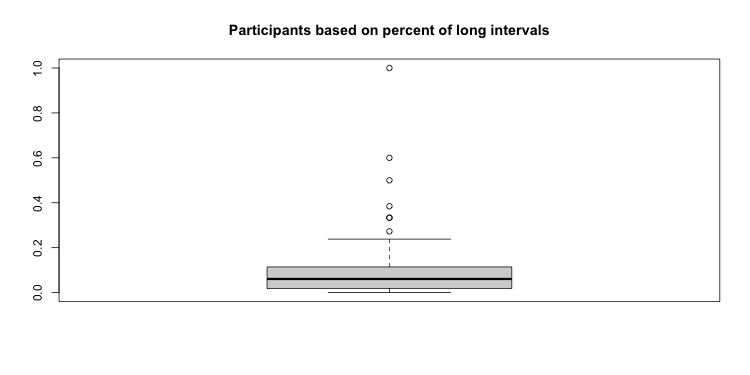}
    \includegraphics[width=0.49\textwidth]{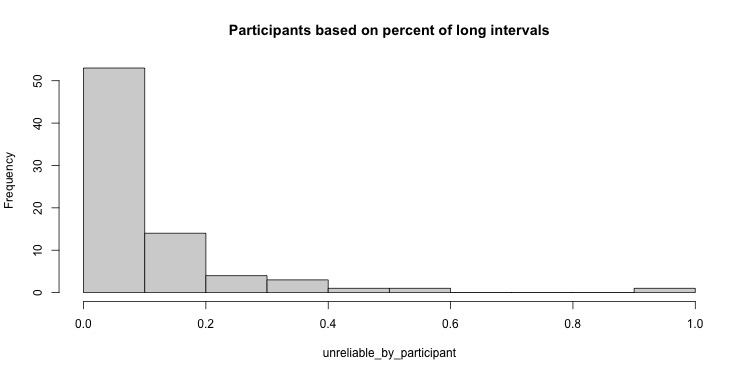}
    \caption{a) boxplot b) histogram for percent of long intervals between subjects. We see that most participants have a small proportion of long intervals (the median is 6.1\%), while a few have a much larger proportion of long intervals.}
    \label{fig:long_intervals_between_subjects}
\end{figure}


\end{document}